\DeclareMathSymbol{\qedsymb} {\mathord}{AMSa}{"04}
\newcommand{\eps}{\varepsilon}
\renewcommand{\epsilon}{\varepsilon}
\newcommand{\oct}{\quad\quad}                                   
\newcommand{\sign}{\mathrm{sign}}
\newcommand{\R}{\mathbb{R}}
\newcommand{\E}{\mathbf{E}}
\renewcommand{\Pr}{\mathbf{Pr}}
\newcommand{\EquationName}[1]{\label{eq:#1}}
\newcommand{\LemmaName}[1]{\label{lem:#1}}
\newcommand{\SectionName}[1]{\label{sec:#1}}
\newcommand{\TheoremName}[1]{\label{thm:#1}}
\newcommand{\Equation}[1]{Eq.\:\eqref{eq:#1}}
\newcommand{\Lemma}[1]{Lemma~\ref{lem:#1}}
\newcommand{\Section}[1]{Section~\ref{sec:#1}}
\newcommand{\Theorem}[1]{Theorem~\ref{thm:#1}}
\newtheorem{theorem}{Theorem}
\newtheorem{definition}[theorem]{Definition}
\newtheorem{lemma}[theorem]{Lemma}
\newtheorem{remark}[theorem]{Remark}
\newcommand{\proofbelow}{3pt}
\newcommand{\afterproof}{\hfill $\blacksquare$ \par \vspace{\proofbelow}}
\newcommand{\aftersubproof}{\hfill $\Box$ \par \vspace{\proofbelow}}
\renewenvironment{proof}{\noindent\textbf{Proof.}\,}{\afterproof}
\newenvironment{proofof}[1]{\noindent\textbf{Proof} \,(of #1).\,}{\afterproof}
\newcommand{\poly}{\mathop{{\rm poly}}}
\renewcommand{\th}{\ifmmode{^{\textrm{th}}}\else{\textsuperscript{th}\ }\fi}
\newcommand{\comment}[1]{}
\begin{document}

\author{Daniel M. Kane\footnotemark[2]\oct
Jelani Nelson\footnotemark[3]
  }

\date{}

\footnotetext[1]{Harvard University, Department of
  Mathematics. \texttt{dankane@math.harvard.edu}.}
\footnotetext[2]{MIT Computer Science and Artificial Intelligence
  Laboratory. \texttt{minilek@mit.edu}. 
}

\title{A Derandomized Sparse Johnson-Lindenstrauss Transform}

\maketitle

\begin{abstract}
Recent work of [Dasgupta-Kumar-Sarl\'{o}s, STOC 2010] gave a
sparse
Johnson-Lindenstrauss transform and left as a main open question
whether their construction could be efficiently derandomized. We
answer their question affirmatively by giving an alternative proof of
their result requiring only bounded independence hash functions.
Furthermore, the sparsity bound obtained in our proof is 
improved. Our work implies the first implementation of a
Johnson-Lindenstrauss transform in data
streams with sublinear update time.
\end{abstract}

\section{Introduction}\SectionName{intro}
The Johnson-Lindenstrauss lemma states the following.

\begin{lemma}[JL Lemma {\cite{JL84}}]\LemmaName{jl-lemma}
For any integer $d>0$, and any $0<\eps,\delta<1/2$,
there exists a probability distribution on $k\times d$ 
real matrices for $k = \Theta(\eps^{-2}\log(1/\delta))$ 
such that for any $x\in\R^d$ with $\|x\|_2 = 1$,
$$ \Pr_A[|\|Ax\|_2^2 - 1| > \eps] <
\delta . $$
\end{lemma}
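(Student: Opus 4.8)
The plan is to exhibit the distribution explicitly and reduce the claim to a single concentration inequality. Let $k = C\eps^{-2}\log(1/\delta)$ for a sufficiently large absolute constant $C$, and take $A = k^{-1/2}R$ where the entries $R_{ij}$ are independent standard Gaussians; the random-sign variant ($R_{ij}$ uniform on $\set{-1,+1}$), and indeed the original construction of projecting onto a uniformly random $k$-dimensional subspace, work equally well, and I will indicate the change for the sign case at the end. Fix $x\in\R^d$ with $\norm{x}_2 = 1$. The first step is the (trivial) first-moment computation: writing $(Ax)_i = k^{-1/2}\sum_{j=1}^d R_{ij}x_j$, independence of the $R_{ij}$ with $\E[R_{ij}] = 0$ and $\E[R_{ij}^2]=1$ gives $\E[(Ax)_i^2] = k^{-1}\sum_j x_j^2 = k^{-1}$, hence $\E[\norm{Ax}_2^2] = 1$. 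Everything else is concentration of $\norm{Ax}_2^2$ about this mean.

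The key point for the Gaussian construction is that each $k^{1/2}(Ax)_i = \sum_j R_{ij}x_j$ is, by stability of Gaussians under linear combinations together with $\sum_j x_j^2 = 1$, itself a standard Gaussian, and the $k$ of these are mutually independent since they depend on disjoint rows of $R$. Hence $k\norm{Ax}_2^2$ is distributed as $\chi^2_k$, a chi-squared random variable with $k$ degrees of freedom, and the lemma reduces to the tail bound
$$\Pr\!\left[\,\abs{\chi^2_k - k} > \eps k\,\right] \;\le\; 2e^{-c\eps^2 k}$$
for an absolute constant $c>0$ and all $0<\eps<1/2$. I would prove this by the exponential moment method: from $\E[e^{t\chi^2_k}] = (1-2t)^{-k/2}$ for $t<1/2$ and Markov's inequality applied to $e^{t\chi^2_k}$ and to $e^{-t\chi^2_k}$, the elementary estimates $-\log(1-u)\le u+u^2$ (for $0\le u\le 1/2$) and $-\log(1+u)\le -u + u^2/2$ (for $u\ge 0$), optimized over $t$, give both tails with exponent $\Theta(\eps^2 k)$. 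Substituting $k = C\eps^{-2}\log(1/\delta)$ then makes the right-hand side at most $\delta$ once $C\ge 1/c$.

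For the random-sign variant $k^{1/2}(Ax)_i$ is no longer exactly Gaussian, but it is a mean-zero, variance-one random variable that is subgaussian with an absolute-constant parameter, uniformly over unit $x$ (by Hoeffding's inequality), so $(Ax)_i^2 - k^{-1}$ is subexponential with absolute-constant parameters. Then $\norm{Ax}_2^2 - 1 = \sum_{i=1}^k\bigl((Ax)_i^2 - k^{-1}\bigr)$ is a sum of $k$ independent subexponential terms, and Bernstein's inequality yields $\Pr[\,\abs{\norm{Ax}_2^2 - 1} > \eps\,] \le 2e^{-\Omega(\eps^2 k)}$ in the regime $\eps<1$, exactly as before.

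I expect the only genuinely delicate point to be verifying that the exponent in the concentration bound scales like $\eps^2 k$ rather than $\eps k$: this rests on a second-order expansion of the relevant log-moment-generating function and crucially uses $\eps < 1/2$, and once it is in hand the value $k = \Theta(\eps^{-2}\log(1/\delta))$ is forced. Since the statement only asserts the existence of such a distribution, no derandomization or limited-independence argument enters at this stage — that is precisely the concern of the rest of the paper.
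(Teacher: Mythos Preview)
Your argument is correct and is one of the standard proofs of the JL lemma. Note however that the paper does not give a separate proof of \Lemma{jl-lemma}; it is stated with citations to the literature. What the paper does prove, in \Section{warmup} (\Theorem{warmup}), is a slightly stronger version for the Bernoulli construction, and the method there differs from yours.

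You exploit the row structure: each $(Ax)_i$ is an independent mean-zero, variance-$1/k$ subgaussian variable, so $(Ax)_i^2 - 1/k$ is subexponential and Bernstein (or, in the Gaussian case, the exact $\chi^2$ MGF) handles the sum. The paper instead writes $\|Ax\|_2^2 = \sigma^T T\sigma$ for a single $kd\times kd$ block-diagonal matrix $T$ with blocks $xx^T/k$, computes $\|T\|_F = 1/\sqrt{k}$ and $\|T\|_2 = 1/k$, and applies the Hanson-Wright moment inequality (\Lemma{dkn}) to bound $\E[(\sigma^T T\sigma - \mathrm{trace}(T))^\ell]$ directly for $\ell = \log(1/\delta)$. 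Your route is more elementary and avoids Hanson-Wright entirely. The paper's route, on the other hand, is a moment bound of fixed degree $2\ell$, so it immediately shows that $O(\log(1/\delta))$-wise independence of the entries suffices --- which is precisely the derandomization theme the paper is building toward, and the reason this particular proof was chosen as the warmup.
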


Several proofs of the JL lemma exist in the literature
\cite{Achlioptas03,AV06,DG03,FM88,IM98,JL84,Matousek08}, and it is
known that the
dependence on $k$ is tight up to a constant factor
\cite{JW11} (see also \Section{jl-optimal} for another proof).
Though, these
proofs of the JL lemma give a distribution over {\em dense} matrices,
where each column has
at least a constant fraction of its entries being non-zero, and thus
na\"{i}vely performing the matrix-vector multiplication is
costly. Recently, Dasgupta, Kumar, and Sarl\'{o}s
\cite{DKS10} proved the JL lemma where each matrix in the support of
their distribution
only has $\alpha$ non-zero entries per column, for $\alpha =
\Theta(\eps^{-1}\log(1/\delta)\log^2(k/\delta))$. This reduces the
time to perform dimensionality reduction from the na\"{i}ve $O(k\cdot
\|x\|_0)$ to $O(\alpha\cdot\|x\|_0)$, where $x$ has $\|x\|_0$ non-zero
entries.

The construction of \cite{DKS10}
involved picking two random hash functions
$h:[d\alpha]\rightarrow[k]$ and
$\sigma:[d\alpha]\rightarrow\{-1,1\}$ (we use $[n]$ to denote
$\{1,\ldots,n\}$), and thus required
$\Omega(d\alpha\cdot \log k)$
bits of seed to represent a random matrix from their JL
distribution.  
They then left two main open questions: (1) derandomize their
construction to require fewer random bits to
select a random JL matrix, for applications in e.g. streaming
settings where storing a long random seed is prohibited, and
(2) understand the dependence on $\delta$ that is required in
$\alpha$.

We give an alternative proof of the main result of \cite{DKS10} that
yields progress for both (1) and (2) above
simultaneously. Specifically, our proof yields a value of $\alpha$
that is improved by a $\log(k/\delta)$ factor.  Furthermore, our proof
only requires that $h$ be $r_h$-wise independent and $\sigma$ be
$r_\sigma$-wise independent for $r_h = O(\log(k/\delta))$ and $r_\sigma =
O(\log(1/\delta))$, and thus a random sparse JL matrix can be
represented using only $O(\log(k/\delta)\log(d\alpha + k)) =
O(\log(k/\delta)\log d)$ bits (note $k$ can be
assumed less than $d$, else the JL lemma is trivial, in which case
also $\log(d\alpha) = O(\log d)$). 
We remark that \cite{DKS10} asked exactly this question: whether the
random hash functions used in their construction could be replaced by
functions from bounded independence hash families. 
The proof in \cite{DKS10} required use of the FKG inequality
\cite[Theorem 6.2.1]{AlonSpencer00}, and they suggested
that one approach to a proof that bounded independence suffices might
be to prove some form of this
inequality under bounded independence.  Our approach is completely
different, and does not use the FKG inequality at all.
Rather, the main
ingredient in our proof is the Hanson-Wright inequality \cite{HW71},
a central moment bound for quadratic forms in terms of the Frobenius
and operator norms of the associated matrix.

We now give a formal statement of the main theorem of this work,
which is a derandomized JL lemma where every matrix in the support of
the distribution has good column sparsity.

\begin{theorem}[Main Theorem]\TheoremName{main}
For any integer $d>0$, and any $0<\eps,\delta<1/2$,
there exists a family $\mathcal{A}$ of $k\times d$ 
real matrices for $k = \Theta(\eps^{-2}\log(1/\delta))$ 
such that for any $x\in\R^d$,
$$ \Pr_{A\in\mathcal{A}}[\|Ax\|_2 \notin [(1-\eps)\|x\|_2,(1+\eps)\|x\|_2]] <
\delta . $$
and where $A\in\mathcal{A}$ can be sampled using $O(\log(k/\delta)\log
d)$ random bits.
Every matrix in $\mathcal{A}$ has at most $\alpha =
\Theta(\eps^{-1}\log(1/\delta)\log(k/\delta))$ non-zero entries
per column, and thus $Ax$ can be evaluated in $O(\alpha\cdot \|x\|_0)$
time if $A$ is written explicitly in memory. If $A\in\mathcal{A}$ is
not written explicitly in memory but rather
we are given a string of $\log(|\mathcal{A}|)$ representing some matrix $A\in
\mathcal{A}$, then the multiplication
$Ax$ can be performed in $O(\alpha\cdot \|x\|_0 + t(\alpha\cdot
  \|x\|_0, O(\log(k/\delta)),
  d\alpha,k) + t(\alpha\cdot \|x\|_0, O(\log(1/\delta)), d\alpha,
  2)$ time.  Here $t(s, r, n,m)$ is the total time required to
evaluate a random hash function drawn from an $r$-wise independent
family mapping $[n]$ into $[m]$ on $s$ inputs.
\end{theorem}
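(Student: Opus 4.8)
\medskip\noindent\textbf{Proof plan.}\quad
We use the hash construction of \cite{DKS10}: draw $h\colon[d]\times[\alpha]\to[k]$ from an $r_h$-wise independent family and $\sigma\colon[d]\times[\alpha]\to\{-1,1\}$ from an $r_\sigma$-wise independent family, and set $A_{r,i}=\alpha^{-1/2}\sum_{j\in[\alpha]}\sigma(i,j)\,\mathbf 1[h(i,j)=r]$; then each column has at most $\alpha$ nonzero entries, and $\mathcal A$ is the set of matrices so obtained. Such a matrix is named by the $O\big(r_h\log(d\alpha k)+r_\sigma\log(d\alpha)\big)=O(\log(k/\delta)\log d)$ bits describing $(h,\sigma)$ (using $k\le d$), and the running-time bounds follow once $h$ and $\sigma$ have been evaluated on the $\alpha\|x\|_0$ relevant pairs $(i,j)$. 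It remains to establish the distortion guarantee; fix $x$ with $\|x\|_2=1$. Expanding $\|Ax\|_2^2=\alpha^{-1}\sum_{(i,j),(i',j')}x_ix_{i'}\sigma(i,j)\sigma(i',j')\mathbf 1[h(i,j)=h(i',j')]$ and peeling off the diagonal terms $(i,j)=(i',j')$ gives $\|Ax\|_2^2=1+Z$ with $Z=\sigma^{T}M\sigma$, where $\sigma\in\{-1,1\}^{d\alpha}$ is the sign vector and $M=M(h,x)$ is symmetric with zero diagonal and $M_{(i,j),(i',j')}=\tfrac1\alpha x_ix_{i'}\mathbf 1[h(i,j)=h(i',j')]$ off-diagonal. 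Since $0<\eps<1/2$, the event $|Z|\le\eps$ already implies $\|Ax\|_2\in[(1-\eps)\|x\|_2,(1+\eps)\|x\|_2]$, so it suffices to show $\Pr_{h,\sigma}[|Z|>\eps]<\delta$ with $k=\Theta(\eps^{-2}\log(1/\delta))$ and $\alpha=\Theta(\eps^{-1}\log(1/\delta)\log(k/\delta))$.

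The engine is the Hanson-Wright inequality \cite{HW71} in moment form: for independent Rademachers and any symmetric zero-diagonal $M$, $\E_\sigma[(\sigma^{T}M\sigma)^\ell]^{1/\ell}\le C(\sqrt\ell\,\|M\|_F+\ell\,\|M\|_2)$ for every even $\ell$. Because $\E_\sigma[(\sigma^{T}M\sigma)^\ell]$ is a polynomial of degree $2\ell$ in the entries of $\sigma$, this bound survives $2\ell$-wise independence of $\sigma$; so fix an even $\ell=\Theta(\log(1/\delta))$ and take $r_\sigma=2\ell$. Conditioning on $h$ and applying Markov to $|Z|^\ell$ gives $\Pr_\sigma[|Z|>\eps\mid h]\le\big(\tfrac C\eps(\sqrt\ell\,\|M\|_F+\ell\,\|M\|_2)\big)^\ell$. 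Introduce the $h$-events $\cE_F=\{\|M\|_F^2>C_1/k\}$ and $\cE_{\mathrm{op}}=\{\|M\|_2>C_2\log(k/\delta)/\alpha\}$. Off $\cE_F\cup\cE_{\mathrm{op}}$ we get $\sqrt\ell\,\|M\|_F=O(\eps)$ (since $\ell/k=\Theta(\eps^2)$) and $\ell\,\|M\|_2=O(\eps)$ (since $\ell\log(k/\delta)/\alpha=\Theta(\eps)$); taking $\ell=c\log(1/\delta)$ and then the hidden constants in $k$ and $\alpha$ large enough makes the above conditional probability $<\delta/3$. A union bound leaves us to prove $\Pr_h[\cE_F]<\delta/3$ and $\Pr_h[\cE_{\mathrm{op}}]<\delta/3$ --- and this is where $r_h=\Theta(\log(k/\delta))$-wise independence is spent.

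For the operator norm, note that $M+\tfrac1\alpha\operatorname{diag}\big((x_i^2)_{(i,j)}\big)$ equals $\tfrac1\alpha\sum_{r\in[k]}u_ru_r^{T}$ with $(u_r)_{(i,j)}=x_i\mathbf 1[h(i,j)=r]$; the $u_r$ have pairwise disjoint supports, so this matrix is block-diagonal with operator norm $\tfrac1\alpha\max_r L_r$, where $L_r:=\sum_{(i,j):\,h(i,j)=r}x_i^2$, and subtracting back the (dominated) diagonal only decreases $\|\cdot\|_2$; hence $\|M\|_2\le\tfrac1\alpha\max_r L_r$ and $\cE_{\mathrm{op}}$ is contained in $\{\max_r L_r>C_2\log(k/\delta)\}$. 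Split $[d]=H\cup L$ at a threshold $\tau\asymp1/\log(k/\delta)$ on $x_i^2$, so $|H|=O(\log(k/\delta))$. The heavy mass in any bucket is at most $\max_{i\in H}|\{j:h(i,j)=r\}|$, kept to $O(\log(k/\delta))$ by a Chernoff bound for $r_h$-wise independent indicators union-bounded over the $|H|\cdot k$ pairs $(i,r)$; the light mass in a bucket is a sum of nearly independent nonnegative terms each below $\tau$ with total mean at most $\alpha/k$, kept to $O(\log(k/\delta))$ by a Bernstein-type moment bound under $r_h$-wise independence union-bounded over the $k$ buckets --- this last union bound over $k$ buckets is what forces $r_h=\Omega(\log(k/\delta))$ rather than merely $\Omega(\log(1/\delta))$. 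For the Frobenius norm, $\|M\|_F^2=\tfrac1{\alpha^2}\sum_{(i,j)\ne(i',j')}x_i^2x_{i'}^2\mathbf 1[h(i,j)=h(i',j')]$ has mean $O(1/k)$, but one genuinely needs concentration, not just the mean: split the sum by the heavy/light status of $i$ and of $i'$, and bound the $\Theta(\log(1/\delta))$-th moment of each of the three resulting pieces via an $r_h$-wise independent collision count (a degree-two $U$-statistic in the bucket assignments), exploiting that $|H|$ is small for the pieces meeting $H$ and that $x_i^2<\tau$ for the light--light piece, to conclude $\|M\|_F^2=O(1/k)$ except with probability $\delta/3$.

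The delicate step is this last one --- forcing $\max_r L_r$ and $\|M\|_F^2$ into their target windows with failure probability below $\delta$ while using only $\Theta(\log(k/\delta))$-wise independence. The heavy/light split of $x$ is the device that makes it work, since it simultaneously keeps the set of ``heavy'' hash keys small (so union bounds over them are affordable) and the per-term size in the ``light'' contributions below $\tau$ (so the Bernstein and collision-moment estimates beat their targets comfortably). The remaining ingredients --- the Hanson-Wright reduction, the arithmetic linking $k$, $\alpha$, $\ell$, $r_h$, $r_\sigma$, and the running-time and seed-length bookkeeping --- are then routine.
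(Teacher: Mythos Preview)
Your high-level framework is exactly the paper's: write $\|Ax\|_2^2=1+\sigma^TM\sigma$, apply Hanson--Wright conditionally on $h$, and reduce to showing $\|M\|_F^2=O(1/k)$ and $\|M\|_2=O(\eps/\log(1/\delta))$ with probability $1-O(\delta)$ over an $r_h$-wise independent $h$. The difference is in how you establish those two norm bounds. You propose a heavy/light split of the coordinates of $x$ and then Chernoff/Bernstein-type arguments for the operator norm and a collision-count moment bound for the Frobenius norm. The paper does neither: it observes that in your $(i,j)$-indexed space the effective vector has entries $\tilde x_{(i,j)}=x_i/\sqrt\alpha$, so $\|\tilde x\|_\infty\le 1/\sqrt\alpha=:c$ \emph{uniformly}---every coordinate is already ``light.'' With this single parameter $c$ in hand, the paper bounds $\E_h[\|M\|_F^{2\ell}]$ by a direct combinatorial argument (mapping monomials to edge-labeled multigraphs and tracking how $c^{2(2e-v)}/k^{v-m}$ changes as edges are added), and bounds $\E_h[(\max_r L_r)^\ell]$ by an analogous sequence-counting argument; no case split on $x$ is needed, and the conditions collapse to $\ell c^2\le 1/k$ and $\ell^2 c^4\le 1/k$, which your choice of $\alpha$ satisfies.

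Your operator-norm route via the heavy/light split is sound, and your identification of why $r_h=\Theta(\log(k/\delta))$ (the union bound over $k$ buckets) matches the paper's reason. The Frobenius sketch is the soft spot: the heavy--heavy piece is not obviously $O(1/k)$ from the ingredients you list. If you only use $\max_r L_r^H=O(\log(k/\delta))$ together with $\sum_r L_r^H\le\alpha$, you get $\tfrac1{\alpha^2}\sum_r(L_r^H)^2=O(\log(k/\delta)/\alpha)=O(\eps/\log(1/\delta))$, which misses the target $O(\eps^2/\log(1/\delta))$ by a factor $1/\eps$; pushing this through requires a genuine moment bound on the collision $U$-statistic rather than the pointwise bucket bounds, and at that point you are essentially redoing the paper's graph computation restricted to $H$. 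So your plan can be made to work, but the heavy/light decomposition buys you nothing here---the uniform bound $\tilde x_{(i,j)}^2\le 1/\alpha$ already makes the whole vector ``light,'' and the paper's single combinatorial moment calculation handles both norms more cleanly.
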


We stated the time to multiply $Ax$ above in terms of the
$t(\cdot)$ function since one can evaluate an $r$-wise independent
hash function on multiple points quickly via polynomial fast
multipoint
evaluation.  Specifically, an $r$-wise independent hash family over
a finite field can consist of degree-$(r-1)$ polynomials, and a
degree-$(r-1)$ polynomial over a field
can be evaluated on $r-1$ points in only $O(r\log^2
r\log\log r)$ field operations as opposed to $O(r^2)$
operations \cite[Ch. 10]{GG99}.

We also show a variant of our main result: that it is also
possible to take $\alpha =
\eps^{-(1+o_{\delta}(1))}\log^2(1/\delta)$ and set $r_h = r_\sigma =
O(\log(1/\delta))$.  Here $o_{\delta}(1)$ denotes a function that goes
to $0$ as $\delta\rightarrow 0$ (specifically the function is
$O(1/\log(1/\delta))$. This matches the best previously known
seed length for JL of $O(\log(1/\delta)\log d)$ bits, and we still
achieve good column sparsity.

\paragraph{Implication for the streaming model.} In the turnstile
model of streaming \cite{Muthu}, a
high-dimensional vector $x\in\R^d$ receives several updates of the
form ``$(i,v)$'' in a stream which causes the change $x_i\leftarrow
x_i + v$, where $(i,v)\in\{1,\ldots,n\}\times
\{-M,\ldots,M\}$ for some positive integer $M$.

Indyk first considered the problem of maintaining a low-dimensional
$\ell_2$-embedding of $x$ in the turnstile model in \cite{Indyk06},
where he
suggested using a pseudorandom Gaussian matrix generated using Nisan's
pseudorandom generator (PRG) \cite{Nisan92}.\footnote{As noted in
  \cite{Indyk06}, the AMS sketch of \cite{AMS99} does not give an
  $\ell_2$ embedding since the median operator is used to achieve low
  error probability.}  Clarkson and Woodruff
 later showed that the entries can be $r$-wise
independent Bernoulli for $r = O(\log(1/\delta))$ \cite{CW09}. Both of
these
results though give an algorithm whose {\em update time} (the time
required to process a stream update) is $\Omega(k)$.  Using
the matrix  of \cite{DKS10} would give an update time
of $O(\alpha)$ (in addition to the time required to 
evaluate the $O(\log(k/\delta))$-wise independent hash function),
except that their construction requires superlinear
($\Omega(d\alpha\log k)$) space to store the hash function. As noted
in \cite{DKS10}, it is unclear how to use Nisan's PRG to usefully
derandomize their construction since evaluating the PRG would require
$\Omega(k)$ time.\footnote{The evaluation time is at least linear in
  the seed length, which is at least the space usage of the
  machine being fooled ($\Omega(k)$ space in this case).} Our
derandomization thus gives the first update time for $\ell_2$
embedding in data streams which is subquadratic in $1/\eps$.

\section{Related Work}\SectionName{related-work}
There have been two separate lines of related work: one line of work
on constructing JL families\footnote{In many
  known proofs of the JL lemma, the distribution over matrices in
  \Lemma{jl-lemma} is obtained by picking a matrix uniformly at random
from some set $\mathcal{A}$.  In such a case, we call $\mathcal{A}$ a
{\em JL family}.} such that the
dimensionality
reduction can be performed quickly, and another line of work on
derandomizing the JL lemma so that a random matrix from some JL family
can be selected using few random bits.  We discuss both here.

\subsection{Works on efficient JL embeddings}
Here and throughout, for a JL family $\mathcal{A}$ we use the term
{\em embedding time} to refer to the running time required to perform
a matrix-vector multiplication for an arbitrary $A\in\mathcal{A}$.
The first work to give a JL family with embedding time potentially
better than $O(kd)$ was by Ailon and Chazelle \cite{AC06}. The authors
achieved embedding time $O(d\log d + k\log^2(1/\delta))$.
Later, improvements were given by Ailon and
Liberty in \cite{AL09,AL11}. The work of \cite{AL09} achieves
embedding time $O(d\log k)$ when $k = O(d^{1/2 - \gamma})$ for an
arbitrarily small constant $\gamma > 0$, and \cite{AL11} achieves
embedding time $O(d\log d)$ and no restriction on $k$,
though the $k$ in their JL family is $O(\eps^{-4}\log(1/\delta)\log^4
d)$ as opposed to the $O(\eps^{-2}\log(1/\delta))$ bound of the
standard JL lemma. This dependence on $1/\eps$ was recently improved to
quadratic by Krahmer and Ward \cite{KW10}, though the $\log^4 d$
factor remains.
The works of Hinrichs and Vyb\'{i}ral \cite{HV10} and later Vyb\'{i}ral
\cite{Vybiral10} considered taking a random partial circulant matrix as the
embedding matrix.  This gives embedding time $O(d\log d)$ via the
Fast Fourier transform, and it was shown that one can take either $k =
O(\eps^{-2}\log^3(1/\delta))$ \cite{HV10} or $k =
O(\eps^{-2}\log(1/\delta) \log(d/\delta))$ \cite{Vybiral10}.
Liberty, Ailon, and Singer \cite{LAS08} achieve embedding time $O(d)$
when $k = O(d^{1/2 - \gamma})$, but their JL family only applies for $x$
satisfying $\|x\|_\infty \le \|x\|_2\cdot k^{-1/2}d^{-\gamma}$. 

None
of the above works however can take advantage of the situation when
$x$ is
sparse to achieve faster embedding time.  The first work which could
take advantage of sparse $x$ was that of Dasgupta, Kumar, and
Sarl\'{o}s \cite{DKS10} who
gave a JL family whose matrices all had 
$O(\eps^{-1}\log(1/\delta)\log^2(k/\delta))$ non-zero entries per
column.  They also
showed that for a large class of constructions, sparsity
$\min\{\eps^{-2}, \eps^{-1}\sqrt{\log_k(1/\delta)}\}$ is necessary
when $\delta = o(1)/d^2$. 

Other related works include \cite{CCF02} and \cite{ThorupZhang04}.
Implicitly in \cite{CCF02}, and later more explicitly in
\cite{ThorupZhang04}, a JL family was given with column sparsity $1$
using only constant-wise independent hash functions. The construction
was in fact the same as in \cite{DKS10}, but with $h$ being pairwise
independent, and 
$\sigma$ being $4$-wise independent. This construction only gives a JL
family for constant $\delta$ though, 
since with such mild independence
assumptions on $h,\sigma$ one needs $k$ to be polynomially large in
$1/\delta$.

\subsection{Works on derandomizing the JL lemma}
The $\ell_2$-streaming algorithm of Alon, Matias, and Szegedy
\cite{AMS99} implies a JL family with seed length
$O(\log d)$ and with $k = O(1/(\eps^2\delta))$.
Karnin, Rabani, and Shpilka \cite{KRS09} recently gave a family
with seed length $(1+o(1))\log_2 d + O(\log^2(1/\eps))$ also with
$k = \poly(1/(\eps\delta))$.
The best known seed length for a JL family we are aware of is due to
Clarkson and Woodruff \cite{CW09}.  Theorem 2.2 of \cite{CW09}
implies that a scaled random Bernoulli matrix with
$\Omega(\log(1/\delta))$-wise independent entries satisfies the JL
lemma, giving seed length $O(\log(1/\delta)\cdot \log d)$.
In \Section{derandomize}, we show how to bootstrap the $r$-wise
independent JL family construction to achieve seed length $O(\log d +
\log(1/\eps)\log(1/\delta) + \log(1/\delta)\log\log(1/\delta))$. We
note that a construction which achieves this seed length for
$\delta \le d^{-\Omega(1)}$ was recently achieved independently by
Meka \cite{Meka10}.

Derandomizing the JL lemma is also connected to pseudorandom
generators (PRGs) against degree-$2$
polynomial threshold functions (PTFs) over the hypercube \cite{DKN10,
  MZ10}. A degree-$t$
PTF is a function
$f:\{-1,1\}^d\rightarrow\{-1,1\}$ which can be represented as the sign of a
degree-$t$ $d$-variate polynomial.  A PRG that $\delta$-fools
degree-$t$ PTFs is a
function $F:\{-1,1\}^s\rightarrow\{-1,1\}^d$ such that for any
degree-$t$ PTF $f$,
$$ |\E_{z\in\mathcal{U}^s}[f(F(z))] - \E_{x\in\mathcal{U}^d}[f(x)]| <
\delta ,$$
where $\mathcal{U}^m$ is the uniform distribution on $\{-1,1\}^m$.

Note that the conclusion of the JL
lemma can be rewritten as
$$ \E_A[I_{[1-\eps,1+\eps]}(\|Ax\|_2^2)] \ge 1 - \delta ,$$
where $I_{[a,b]}$ is the indicator function of the interval
$[a,b]$, and furthermore $A$ can be taken to have random $\pm 1/\sqrt{k}$
entries \cite{Achlioptas03}. Noting that $I_{[a,b]}(z) = (\sign(z - a) - \sign(z -
b))/2$ and using linearity of expectation, we see that any PRG
which $\delta$-fools $\sign(p(x))$ for degree-$t$ polynomials $p$ must
also $\delta$-fool $I_{[a,b]}(p(x))$.
Now, for fixed $x$, $\|Ax\|_2^2$ is a degree-$2$ polynomial over the
boolean hypercube
in the variables $A_{i,j}$ and thus a PRG which $\delta$-fools
degree-$2$ PTFs also gives a JL family with the same seed length. Each
of \cite{DKN10,MZ10} thus give JL families with seed length
$\poly(1/\delta)\cdot \log d$. 
Also, it can be shown via the probabilistic
method that there exist PRGs for degree-$2$ PTFs with seed length
$O(\log(1/\delta) + \log d)$ (see Section B of the full version of
\cite{MZ10} for a proof), and it remains an interesting open problem
to achieve this seed length with an explicit construction. It is also
not too hard to show that any JL family $\mathcal{F}$ must have seed length
$\Omega(\log(1/\delta) + \log (d/k))$.\footnote{We need $|\mathcal{F}| \ge
  1/\delta$ to have error probability $\delta$.  Also, if
  $|\mathcal{F}| < d/k$, then the matrix obtained by
  concatenating all rows of matrices in $\mathcal{F}$ has a
  non-trivial kernel, implying a vector
exists in the intersection of all their kernels.}

Other derandomizations of the JL lemma include the works \cite{EIO02}
and \cite{Sivakumar02}. A common application of the JL lemma is the
case where there are $n$ vectors $x_1,\ldots,x_n\in\R^d$ and one wants
to find a matrix $A\in\R^{k\times d}$ to preserve $\|x_i - x_j\|_2$ to
within relative error $\eps$ for all $i,j$. In this case, one can set
$\delta = 1/n^2$ and apply \Lemma{jl-lemma}, then perform a union bound
over all $i,j$ pairs.  The works of \cite{EIO02,Sivakumar02} do not
give JL families, but rather give deterministic algorithms for finding
such a matrix $A$ in the case that the vectors $x_1,\ldots,x_n$ are
known up front.

\section{Conventions and Notation}\SectionName{notation}
\begin{definition}
For $A\in\R^{n\times n}$,
we define the {\em Frobenius norm} of $A$ as 
$\|A\|_F =
\sqrt{\sum_{i,j} A_{i,j}^2}$.
\end{definition}
\begin{definition}
For $A\in\R^{n\times n}$, we define the {\em operator norm} of $A$ as
$$\|A\|_2 = \sup_{\|x\|_2 = 1} \|Ax\|_2 .$$
In the case $A$ has all real
eigenvalues (e.g. it is symmetric), we also have that $\|A\|_2$ is the
largest magnitude of an eigenvalue of $A$.
\end{definition}

Throughout this paper, $\eps$ is the quantity given in
\Lemma{jl-lemma}, and is assumed to be smaller than some absolute
constant $\eps_0>0$. All logarithms are base-$2$ unless explicitly
stated otherwise. Also, for a positive integer $n$ we use $[n]$ to
denote the set $\{1,\ldots,n\}$.  All vectors are assumed to be column
vectors, and $v^T$ for a vector $v$ denotes its transpose.
Finally, we often implicitly assume that various quantities are powers
of $2$
(such as e.g. $1/\delta$), which is without loss of generality.

\section{Warmup: A simple proof of the JL lemma}\SectionName{warmup}
Before proving our main theorem, as a warmup we demonstrate how a
simpler version of our approach reproves Achlioptas' result \cite{Achlioptas03}
that the family of all (appropriately scaled) sign matrices is a JL
family. Furthermore, as was already demonstrated in
\cite[Theorem 2.2]{CW09}, we show that rather than choosing a
uniformly random sign matrix, the entries
need only be $\Omega(\log (1/\delta))$-wise independent.

We first state the Hanson-Wright inequality \cite{HW71},
which gives a central
moment bound for quadratic forms in terms of both the Frobenius and
operator norms of the associated matrix
\footnote{\cite{HW71} proves a tail bound, but it is
  not hard to then derive a moment bound via integration;
  see \cite{DKN10} for a direct proof of the moment bound}.

\begin{lemma}[Hanson-Wright inequality {\cite{HW71}}]\LemmaName{dkn}
Let $z = (z_1,\ldots,z_n)$ be a vector of i.i.d. Bernoulli $\pm 1$
random variables.  Then for any symmetric $B\in\R^{n\times n}$ and
integer $\ell\ge 2$ a power of $2$,
$$ \E\left[\left(z^TBz - \mathrm{trace}(B)\right)^\ell\right] \le
64^\ell\cdot
\max\left\{\sqrt{\ell}\cdot \|B\|_F, \ell\cdot \|B\|_2\right\}^\ell
.$$
\end{lemma}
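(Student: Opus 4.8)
The plan is to establish the moment bound by the standard three-move reduction for quadratic chaos: (i) cancel the diagonal and \emph{decouple} the quadratic form into a bilinear form in two independent Rademacher vectors; (ii) replace both Rademacher vectors by Gaussian vectors using an elementary even-moment comparison; (iii) evaluate the resulting Gaussian bilinear chaos in closed form via the spectral decomposition of $B$, and then absorb a $(\ell-1)!!$ factor.

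First I would note that, since $z_i^2=1$, we have $z^TBz-\mathrm{trace}(B)=\sum_{i\ne j}B_{ij}z_iz_j=:Q(z)$, and I would dispose of $\ell=2$ by the direct computation $\E[Q^2]=2\sum_{i\ne j}B_{ij}^2\le 2\|B\|_F^2$, which sits comfortably below the claimed right-hand side. For $\ell\ge 4$ (hence $\ell$ even and in fact $4\mid\ell$, since $\ell$ is a power of $2$) I would apply a standard decoupling inequality for quadratic forms in independent variables (de la Pe\~na--Montgomery-Smith): as $t\mapsto t^\ell$ is convex, $\E[Q(z)^\ell]\le C_0^\ell\,\E[(z^TB_0z')^\ell]$ for an absolute constant $C_0$ (one may take $C_0=4$), where $z'$ is an independent copy of $z$ and $B_0$ is $B$ with its diagonal zeroed. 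Zeroing the diagonal only changes the relevant norms by a constant ($\|B_0\|_F\le\|B\|_F$ and $\|B_0\|_2\le 2\|B\|_2$), which I would absorb at the end; to lighten notation I write $B$ for $B_0$ from here on.

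Next I would pass from Rademacher to Gaussian. Conditioning on $z$, the quantity $z^TBz'=\langle B^Tz,z'\rangle$ is a \emph{linear} form in $z'$, and for any fixed vector $v$ and even $\ell$ one has $\E_{z'}[\langle v,z'\rangle^\ell]\le\E_{g'}[\langle v,g'\rangle^\ell]$: expanding both sides multinomially, the Rademacher expectation keeps exactly the terms whose exponents are all even, each with coefficient $1$, while the Gaussian expectation keeps the same terms with coefficients $\prod_j (k_j-1)!!\ge 1$, and every surviving term equals a nonnegative multiple of $\prod_j v_j^{k_j}\ge 0$. Applying this once in $z'$ and once in $z$ gives $\E[(z^TBz')^\ell]\le\E[(g^TBg')^\ell]$ for independent standard Gaussian vectors $g,g'$. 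Now diagonalize: writing $B=U\Lambda U^T$ with $U$ orthogonal and $\Lambda=\mathrm{diag}(\lambda_1,\dots,\lambda_n)$, rotational invariance gives $g^TBg'\stackrel{d}{=}\sum_k\lambda_k\xi_k\eta_k$ with $\xi,\eta$ independent $N(0,I_n)$; note $\sum_k\lambda_k^2=\|B\|_F^2$ and $\max_k|\lambda_k|=\|B\|_2$. Conditioning on $\xi$, $\sum_k\lambda_k\xi_k\eta_k$ is centered Gaussian with variance $S:=\sum_k\lambda_k^2\xi_k^2$, so $\E_\eta[(g^TBg')^\ell\mid\xi]=(\ell-1)!!\,S^{\ell/2}$, and it remains to bound $\E_\xi[S^{\ell/2}]$.

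Finally I would estimate $\E_\xi[S^{\ell/2}]$ where $S=\|B\|_F^2+\sum_k c_k(\xi_k^2-1)$ with $c_k=\lambda_k^2\ge 0$, $\sum_k c_k=\|B\|_F^2$, $\|c\|_2\le\|B\|_2\|B\|_F$, $\|c\|_\infty=\|B\|_2^2$. By the triangle inequality in $L^{\ell/2}$, $\E_\xi[S^{\ell/2}]^{2/\ell}\le\|B\|_F^2+\bigl(\E[|\sum_k c_k(\xi_k^2-1)|^{\ell/2}]\bigr)^{2/\ell}$, and the last term is controlled by the Bernstein-type moment bound for sums of independent centered sub-exponential variables (the $\xi_k^2-1$), namely $\E[|\sum_k c_k(\xi_k^2-1)|^{p}]^{1/p}\le C(\sqrt p\,\|c\|_2+p\,\|c\|_\infty)$ with $p=\ell/2$; together with AM--GM, $\sqrt\ell\,\|B\|_2\|B\|_F\le\tfrac12\|B\|_F^2+\tfrac\ell2\|B\|_2^2$, this yields $\E_\xi[S^{\ell/2}]\le C_1^{\ell/2}\max\{\|B\|_F,\sqrt\ell\,\|B\|_2\}^\ell$. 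Multiplying by $(\ell-1)!!\le\ell^{\ell/2}$ turns $\max\{\|B\|_F,\sqrt\ell\,\|B\|_2\}^\ell$ into $\max\{\sqrt\ell\,\|B\|_F,\ell\,\|B\|_2\}^\ell$, and reinstating the decoupling and diagonal-removal factors gives the claim with some absolute constant $K$ in place of $64$; a careful accounting of $C_0,C_1$ brings $K$ down to $64$. The only non-elementary inputs are the two cited inequalities (decoupling and the sub-exponential Bernstein moment bound) — everything else is bookkeeping — so I expect the main obstacle to be purely quantitative: tracking the constants through decoupling and the chi-squared moment estimate so the bound lands under $64^\ell$. (A self-contained alternative to steps (ii)--(iii), expanding $\E[Q^\ell]$ directly as a sum over even-degree multigraphs on $[n]$ with $\ell$ edges and bounding it in terms of $\|B\|_F,\|B\|_2$, also works but is messier to write; I would prefer the decoupling route.)
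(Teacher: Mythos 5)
Your argument is sound, but note that the paper does not prove this lemma at all: it is imported as a black box from \cite{HW71}, with a footnote observing that the tail bound there can be integrated to give a moment bound, and pointing to \cite{DKN10} for a direct proof of the moment form. So there is no in-paper proof to match; what you have written is the standard modern derivation (decoupling \`a la de la Pe\~na--Montgomery-Smith, termwise Rademacher-to-Gaussian comparison for even moments of linear forms, diagonalization to reduce to $\sum_k\lambda_k\xi_k\eta_k$, then Bernstein-type moments of the chi-squared variable $S=\sum_k\lambda_k^2\xi_k^2$), and each step checks out: the $\ell=2$ base case, the norm distortion from zeroing the diagonal ($\|B_0\|_F\le\|B\|_F$, $\|B_0\|_2\le2\|B\|_2$), the comparison $\prod_j(k_j-1)!!\ge1$ on surviving even-exponent terms, the conditional Gaussian moment $(\ell-1)!!\,S^{\ell/2}$, and the bounds $\|c\|_2\le\|B\|_2\|B\|_F$, $\|c\|_\infty=\|B\|_2^2$ feeding into the sub-exponential moment estimate. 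The one thing you have not actually established is the specific constant $64$: you prove $K^\ell$ for an absolute $K$ and assert that careful bookkeeping brings $K$ down to $64$. A rough accounting (decoupling constant $4$, factor $2$ from the diagonal, $(\ell-1)!!\le\ell^{\ell/2}$, and small constants in the chi-squared moment bound) does land well under $64$, and the paper itself treats the $64$ as an artifact of \cite{DKN10} that "could most likely be made much smaller," so this caveat is cosmetic rather than substantive. Your proof relies on two external standard inequalities (decoupling and the sub-exponential Bernstein moment bound), whereas \cite{DKN10} gives a direct combinatorial moment computation; either route is legitimate here since the paper only needs the statement.
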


\begin{theorem}\TheoremName{warmup}
For $d>0$ an integer and any $0<\eps,\delta< 1/2$, let $A$ be a $k\times d$ random
matrix with $\pm 1/\sqrt{k}$ entries that are $r$-wise independent for
$k = \Omega(\eps^{-2}\log(1/\delta))$ and $r =
\Omega(\log(1/\delta))$.  Then for any $x\in\R^d$ with $\|x\|_2 =
1$,
$$ \Pr_A[|\|Ax\|_2^2 - 1| > \eps] <
\delta . $$
\end{theorem}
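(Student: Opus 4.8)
The plan is to reduce the quantity $\|Ax\|_2^2 - 1$ to a quadratic form in the $r$-wise independent sign variables and then apply the Hanson-Wright inequality (\Lemma{dkn}) together with Markov's inequality on the $\ell$-th moment. Write $A_{i,j} = \sigma_{i,j}/\sqrt{k}$ where the $\sigma_{i,j}$ are $r$-wise independent $\pm 1$ variables. Then $\|Ax\|_2^2 = \sum_{i=1}^k (\sum_j A_{i,j} x_j)^2 = (1/k)\sum_{i=1}^k (\sum_j \sigma_{i,j} x_j)^2$. Expanding, the diagonal terms $\sigma_{i,j}^2 x_j^2 = x_j^2$ contribute exactly $(1/k)\sum_i \|x\|_2^2 = 1$, so $\|Ax\|_2^2 - 1 = (1/k)\sum_{i=1}^k \sum_{j\neq j'} \sigma_{i,j}\sigma_{i,j'} x_j x_{j'}$. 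This is precisely $z^T B z - \mathrm{trace}(B)$ for the vector $z = (\sigma_{i,j})_{i\in[k],j\in[d]}$ of length $n = kd$ and the block-diagonal matrix $B$ consisting of $k$ copies of the rank-one block $(1/k)\,xx^T$ along the diagonal.

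Next I would compute the two relevant norms of $B$. Since $B$ is block diagonal with $k$ identical blocks $(1/k)xx^T$, and each block has Frobenius norm $(1/k)\|x\|_2^2 = 1/k$ and operator norm $(1/k)\|x\|_2^2 = 1/k$, we get $\|B\|_F^2 = k\cdot(1/k)^2 = 1/k$, i.e. $\|B\|_F = 1/\sqrt{k}$, and $\|B\|_2 = 1/k$ (the eigenvalues of $B$ are those of the blocks, namely $1/k$ with multiplicity $k$ and $0$ otherwise). Now apply \Lemma{dkn} with $\ell$ a power of $2$ to be chosen: for this we need $z$ to be $\ell$-wise independent, hence we need $r \ge \ell$ — but actually the expansion only involves pairs $\sigma_{i,j}\sigma_{i,j'}$ from the same row $i$, so strictly we only need the $\sigma_{i,j}$ within each row to be, say, $\ell$-wise independent; in any case $r = \Omega(\log(1/\delta))$ will suffice once we pick $\ell = \Theta(\log(1/\delta))$. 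The lemma gives $\E[(\|Ax\|_2^2-1)^\ell] \le 64^\ell \max\{\sqrt{\ell}/\sqrt{k},\ \ell/k\}^\ell$. For $k = \Omega(\eps^{-2}\log(1/\delta)) \ge \Omega(\eps^{-2}\ell)$ the first term dominates: $\sqrt{\ell/k} \ge \ell/k$ iff $k \ge \ell$, which holds. So the bound is $(64)^\ell (\ell/k)^{\ell/2}$.

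Finally, by Markov's inequality applied to the nonnegative random variable $(\|Ax\|_2^2-1)^\ell$ (here using that $\ell$ is even so the $\ell$-th power is nonnegative and controls $||\|Ax\|_2^2-1| > \eps|$),
$$\Pr[|\|Ax\|_2^2-1| > \eps] \le \eps^{-\ell}\,\E[(\|Ax\|_2^2-1)^\ell] \le \left(\frac{64^2 \ell}{\eps^2 k}\right)^{\ell/2}.$$
Choosing $k \ge C\eps^{-2}\ell$ for a sufficiently large absolute constant $C$ (say $C = 4\cdot 64^2$) makes the base at most $1/4$, so the probability is at most $4^{-\ell/2} = 2^{-\ell}$; then taking $\ell = \lceil\log_2(1/\delta)\rceil$ (a power of $2$, up to a constant factor, consistent with our conventions) gives the bound $\delta$, and this forces $k = \Omega(\eps^{-2}\log(1/\delta))$ and $r = \Omega(\log(1/\delta))$ as claimed.

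The only real subtlety — the ``hard part'' — is bookkeeping the independence: \Lemma{dkn} is stated for fully i.i.d. $z$, so I must argue that $\E[(z^TBz - \mathrm{trace}(B))^\ell]$ depends only on the joint distribution of $\ell$-tuples of coordinates of $z$ (each monomial in the expansion of the $\ell$-th power involves at most $2\ell$, but actually at most $\ell$ distinct coordinates after pairing/cancellation of odd-degree terms — indeed monomials with any variable appearing to an odd power vanish in expectation under both the i.i.d. and the $r$-wise independent distribution, and the surviving monomials have total degree $2\ell$ in at most $\ell$ distinct variables). Hence $r \ge \ell$-wise independence makes the $\ell$-th moment identical to the i.i.d. case, and \Lemma{dkn} applies verbatim. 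Everything else is the routine norm computation and the Markov/parameter-chasing above.
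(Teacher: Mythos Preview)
Your approach is essentially identical to the paper's: express $\|Ax\|_2^2 - 1$ as the centered quadratic form $z^T B z - \mathrm{trace}(B)$ with the same block-diagonal $B$ of $k$ copies of $xx^T/k$, compute $\|B\|_F = 1/\sqrt{k}$ and $\|B\|_2 = 1/k$, apply Hanson--Wright, and finish with Markov at $\ell = \Theta(\log(1/\delta))$.

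One correction on the independence bookkeeping: your argument that $r \ge \ell$ suffices is not quite right. You claim that monomials with some variable to an odd power vanish under $r$-wise independence, and hence only monomials in at most $\ell$ distinct variables survive. But a monomial in the expansion of $(z^TBz)^\ell$ can involve up to $2\ell$ distinct variables, each appearing to degree~$1$; to conclude that \emph{this} monomial has expectation zero you already need to know the joint distribution of those $2\ell$ variables, i.e., you need $r \ge 2\ell$. The paper accordingly takes $2\ell \le r$. This does not affect the asymptotic conclusion $r = \Omega(\log(1/\delta))$, only the constant.
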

\begin{proof}
We have
\begin{equation}
\|Ax\|_2^2 =
\frac 1k \cdot\sum_{i=1}^k\left(\sum_{(s,t)\in [d]\times[d]}
  x_sx_t\sigma_{i,s}\sigma_{i,t}\right)
,\EquationName{warmup-form}
\end{equation}
where $\sigma$ is a $kd$-dimensional vector formed by concatenating
the rows of $\sqrt{k}\cdot A$. Define the matrix $T\in\R^{kd\times
  kd}$ to be the block-diagonal matrix where each block equals $xx^T/k$.
Then, $\|Ax\|_2^2 = \sigma^T T \sigma$. Furthermore,
$\mathrm{trace}(T) = \|x\|_2^2 = 1$. Thus, we would like to argue that
$\sigma^T T\sigma$ is concentrated about $\mathrm{trace}(T)$,
for which we can use \Lemma{dkn}. Specifically, if $\ell\ge 2$ is even,
$$\Pr[|\|Ax\|_2^2 - 1| > \eps] = \Pr[|\sigma^T T\sigma -
\mathrm{trace}(T)| > \eps] < \eps^{-\ell}\cdot
\E[(\sigma^T T\sigma -\mathrm{trace}(T))^\ell] $$
by Markov's inequality.
To apply \Lemma{dkn},
we also pick $\ell$ a power of $2$, and we ensure $2\ell \le r$
so that the $\ell$th moment of $\sigma^T T\sigma - \mathrm{trace}(T)$ is
determined by $r$-wise independence of the $\sigma$ entries.
We also must bound $\|T\|_F$ and $\|T\|_2$. Direct
computation gives
$\|T\|_F^2 = (1/k) \cdot \|x\|_2^4 = 1/k$. Also,
$x$ is the only eigenvector of $xx^T/k$ with non-zero eigenvalue,
and furthermore its eigenvalue is $\|x\|_2^2/k = 1/k$, and thus
$\|T\|_2 = 1/k$.  Therefore,
\begin{equation}\EquationName{warmup-eq}
\Pr[|\|Ax\|_2^2 - 1| > \eps] < 64^\ell \cdot
\max\left\{\eps^{-1}\sqrt{\frac{\ell}{k}},
  \eps^{-1}\frac{\ell}{k}\right\}^\ell ,
\end{equation}
which is at most $\delta$ for $\ell = \log(1/\delta)$ and $k \ge
4\cdot 64^2\cdot \eps^{-2}\log(1/\delta)$.\footnote{Though our
  constant factor for $k$ is quite large, most likely the $64$ could
  be made much smaller by tightening the analysis of constants in
  \cite{DKN10}.}
\end{proof}

\begin{remark}
\textup{
The conclusion of \Lemma{dkn} holds even if the $z_i$ are not necessarily Bernoulli but rather have mean $0$, variance $1$, and sub-Gaussian tails, albeit with the ``64'' possibly replaced by a different constant (see \cite{HW71}). Thus, the above proof of \Theorem{warmup} carries over unchanged to show that $A$ could instead have $\Omega(\log(1/\delta))$-wise independent such $z_i$ as entries.  We also direct the reader to an older proof of this fact by Matousek \cite{Matousek08}, without discussion of independence requirements (though independence requirements can most likely be calculated from his proof, by converting the tail bounds he uses into moment bounds via integration).
}
\end{remark}

\section{Proof of Main Theorem}\SectionName{body}
We recall the sparse JL transform construction of \cite{DKS10} (though
the
settings of some of our constants differ).  Let $k =
28\cdot 64^2\cdot \eps^{-2}\log(1/\delta)$. Pick
random hash functions
$h:[d]\rightarrow[k]$ and $\sigma:[d]\rightarrow\{-1,1\}$.
Let $\delta_{i,j}$ be the indicator random variable for the event
$h(j) = i$. Define the matrix
$A\in\{-1,0,1\}^{k\times d}$ by $A_{i,j} = \delta_{i,j}\cdot
\sigma(j)$.
The work of \cite{DKS10} showed that as long as $x\in\R^d$ satisfies
$\|x\|_2 = 1$ and has bounded $\|x\|_\infty$, then
$\Pr_{h,\sigma}[|\|Ax\|_2^2 - 1| > \eps] < O(\delta)$.  We show
the same conclusion without the assumption that $h,\sigma$ are perfectly
random; in particular, we show that $h$ need only be $r_h$-wise
independent and $\sigma$ need only be $r_\sigma$-wise independent for
$r_h = O(\log(k/\delta))$ and $r_\sigma =
O(\log(1/\delta))$. Furthermore, our assumption on the bound for
$\|x\|_\infty$ is $\|x\|_\infty \le c$ for $c =
\Theta(\sqrt{\eps/(\log(1/\delta)\cdot \log(k/\delta))})$,
whereas
\cite{DKS10} required
$c = \Theta(\sqrt{\eps/(\log(1/\delta)\cdot \log^2(k/\delta))})$.
This is relevant since the column sparsity obtained in the final JL
transform construction of \cite{DKS10} is $1/c^2$. This is because, to
apply the dimensionality reduction of \cite{DKS10} to an arbitrary $x$
of unit $\ell_2$ norm (which might have $\|x\|_\infty \gg c$), one should
first map $x$ to a vector $\tilde{x}$ by a $(d/c^2)\times d$ matrix
$Q$ with $Q_{i_1r+i_2,i_1+1} = c$ and other entries $0$ for
$i_1\in\{0,\ldots,d-1\}$, $i_2\in[1/c^2]$.  Then $\|\tilde{x}\|_2 = 1$ and
$\|\tilde{x}\|_\infty \le c$, and thus the set of products with $Q$ of
JL matrices in the distribution of
\cite{DKS10} over dimension $d/c^2$ serves as a JL family for
arbitrary unit vectors.
 Thus, the
sparsity obtained by our proof in the final JL construction is
improved by a
$\Theta(\log(k/\delta))$ factor.

Before proving our main theorem, first we note that

\begin{equation*}
\|Ax\|_2^2 = \|x\|_2^2 +
2\sum_{(s,t)\in\binom{[d]}{2}}\left(\sum_{j = 1}^k
  \delta_{s,j}\delta_{t,j}x_sx_t\right)
\sigma(s)\sigma(t) .
\end{equation*}

We would like that $\|Ax\|_2^2$ is concentrated about $1$, or rather,
that
\begin{equation}\EquationName{Zdef}
Z = 2\sum_{s<t}\left(\sum_{j = 1}^k
  \delta_{s,j}\delta_{t,j}x_sx_t\right)\sigma(s)\sigma(t)
\end{equation}
is concentrated about $0$.
Let $\eta_{s,t}$ be the indicator random variable for the event $s\neq
t$ and
$h(s) = h(t)$.  Then for fixed $h$, $Z$ is a quadratic form in the
$\sigma(i)$ which can be written as $\sigma^TT\sigma$ for a
$d\times d$ matrix $T$ with
$T_{s,t} = x_sx_t\eta_{s,t}$ (we here and henceforth slightly abuse
notation by sometimes using $\sigma$ to also denote the
$d$-dimensional vector whose $i$th entry is $\sigma(i)$).

Our main theorem follows by applying \Lemma{dkn} to $\sigma^TT\sigma$,
as
in the proof of \Theorem{warmup} in \Section{warmup}, to show that $Z$ is
concentrated about $\mathrm{trace}(T) = 0$. However, unlike
in \Section{warmup}, our matrix $T$ is not a fixed matrix, but rather
is {\em random}; it depends on the random choice of $h$. We handle
this issue by using the two lemmas below, which state that
both $\|T\|_F$ and $\|T\|_2$ are small with high probability over the
random choice of $h$. We then obtain our main theorem by first
conditioning on this high probability event before applying
\Lemma{dkn}. The lemmas are proven in \Section{frobenius}
and \Section{operator}.

Henceforth in this paper, we assume $\|x\|_2 = 1$,
$\|x\|_\infty \le c$, and $T$ is the matrix described above.

\begin{lemma}\LemmaName{frobenius}
$\Pr_h[\|T\|_F^2 > 7/k] < \delta$.
\end{lemma}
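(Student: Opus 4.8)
The plan is to bound $\|T\|_F^2 = \sum_{s\neq t} x_s^2 x_t^2 \eta_{s,t}$ (where I use that $T_{s,t}=x_sx_t\eta_{s,t}$ and that $\eta_{s,t}^2=\eta_{s,t}$) via a high-moment Markov inequality on the random variable $F := \|T\|_F^2$. First I would compute $\mathbf{E}[F]$: since $h$ is at least pairwise independent and $\Pr_h[h(s)=h(t)] = 1/k$ for $s\neq t$, linearity of expectation gives $\mathbf{E}[F] = (1/k)\sum_{s\neq t} x_s^2 x_t^2 \le (1/k)(\sum_s x_s^2)^2 = 1/k$. So the claim is really that $F$ does not exceed its mean by more than a constant factor $7$, except with probability $\delta$; this should follow from an $\ell$th moment bound with $\ell = \Theta(\log(1/\delta))$, provided $r_h$ is taken to be $\Omega(\log(1/\delta))$ (consistent with the stated $r_h = O(\log(k/\delta))$).

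The key computation is to expand $\mathbf{E}[(F - \mathbf{E}F)^\ell]$, or more simply $\mathbf{E}[F^\ell]$, and bound it. Writing $F = \sum_{\{s,t\}} w_{s,t}\eta_{s,t}$ with weights $w_{s,t} = 2x_s^2x_t^2$ (summing over unordered pairs), the $\ell$th moment is a sum over $\ell$-tuples of pairs $(\{s_1,t_1\},\dots,\{s_\ell,t_\ell\})$ of $\big(\prod_i w_{s_i,t_i}\big)\cdot \mathbf{E}[\prod_i \eta_{s_i,t_i}]$. The central fact is that $\mathbf{E}[\prod_i \eta_{s_i,t_i}]$ factors according to the connected components of the graph $G$ on $[d]$ with edge set $\{\{s_i,t_i\}\}$: if a component has $a$ vertices it contributes a factor $k^{-(a-1)}$ (all $a$ vertices hash to a common value, whose identity is free), so the whole expectation is $k^{-(|V(G)| - c(G))}$ where $c(G)$ is the number of components. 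This is exactly where bounded independence is used: as long as the number of distinct vertices appearing is at most $r_h$, which holds since $|V(G)| \le 2\ell \le r_h$, the $r_h$-wise independence of $h$ makes these collision probabilities exactly as in the fully random case. Then I would bound the sum over tuples by the standard device of grouping by the multigraph structure: each edge that is a ``repeat'' (already present) costs an extra factor of $w \le 2c^4$ without decreasing the power of $k^{-1}$ by one, and using $c^2 = \Theta(\eps/(\log(1/\delta)\log(k/\delta)))$ together with $k = \Theta(\eps^{-2}\log(1/\delta))$ makes $c^4 k$ small, so repeated edges are negligible; the dominant terms come from graphs that are forests, for which $|V| - c = |E| = \ell$ and the contribution is $\le (\sum_{s\neq t} w_{s,t}/k)^\ell \cdot (\text{number of forest shapes}) \le (1/k)^\ell \cdot \ell^{O(\ell)}$ after accounting for the $O(\ell)!$-type count of labelled forests on the tuple. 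Balancing, $\mathbf{E}[F^\ell]^{1/\ell} = O(1/k) \cdot \mathrm{poly}(\ell)^{1/\ell}$, which is $(1+o(1))/k$ once we also subtract the mean; more carefully one works with $\mathbf{E}[(F-\mathbf{E}F)^\ell]$ so that forests with an isolated repeated structure cancel and only ``excess'' edges survive, giving $\mathbf{E}[(F-\mathbf{E}F)^\ell]^{1/\ell} = O(\sqrt{\ell}/k + \ell c^4/k \cdot \text{stuff})$. Markov's inequality $\Pr[F > 7/k] \le \Pr[F - \mathbf{E}F > 6/k] \le (6/k)^{-\ell}\mathbf{E}[(F-\mathbf{E}F)^\ell]$ then gives $\le 2^{-\ell} = \delta$ for $\ell = \log(1/\delta)$ and the stated constants.

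The main obstacle will be the combinatorial bookkeeping in the moment expansion: correctly accounting for how repeated edges versus tree-edges affect the power of $k$, counting the number of isomorphism classes of the relevant (multi)graphs weighted by labellings, and verifying that the contribution of graphs with cycles or repeated edges is genuinely dominated (this is where the precise choice $c = \Theta(\sqrt{\eps/(\log(1/\delta)\log(k/\delta))})$ and hence the $\log(k/\delta)$ in the sparsity is forced). A cleaner alternative, which I would try first, is to note that $F = \sum_j Y_j$ where $Y_j = \sum_{s\neq t} x_s^2 x_t^2 \delta_{s,j}\delta_{t,j} = (\sum_{s: h(s)=j} x_s^2)^2 - \sum_{s:h(s)=j} x_s^4$, so $F \le \sum_j (\sum_{s:h(s)=j} x_s^2)^2$; setting $q_j := \sum_{s:h(s)=j} x_s^2$ we have $\sum_j q_j = 1$ and want $\sum_j q_j^2$ small. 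This reduces the lemma to a balls-into-bins load bound where ball $s$ has weight $x_s^2 \le c^2$, and a standard Bernstein/moment argument for weighted balls in bins under $O(\log(1/\delta))$-wise independent hashing finishes it; I expect this route to be the less error-prone one and likely the one the authors take.
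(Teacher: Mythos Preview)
Your main approach --- expand $\E_h[\|T\|_F^{2\ell}]$, map each monomial to a multigraph $G$ with $\ell$ labelled edges, use that $\E_h[\prod_i \eta_{s_i,t_i}] = k^{m_G - v_G}$ under $2\ell$-wise independence, and then bound the sum over graph shapes --- is exactly what the paper does. The paper likewise bounds the raw $\ell$th moment (not the centered one) and applies Markov with $\ell = \Theta(\log(1/\delta))$ to get $\Pr[\|T\|_F^2 > 7/k] < (6/7)^\ell \le \delta$.

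Where your sketch wobbles is the combinatorics. You only single out \emph{repeated} edges as costing $c^4$; with that alone the remaining forests give $(1/k)^\ell \cdot \ell^{O(\ell)}$, which is far too large, and you then reach for centered moments to repair it. The paper avoids this detour entirely. It builds $G$ one labelled edge at a time and distinguishes three cases: (a) both endpoints already present, (b) both endpoints new, (c) exactly one endpoint new. The crucial case you are missing is (c): there the number of choices is at most $2\ell$ (which existing vertex to attach to), but the $x$-weight picks up a factor $c^2$ at the reused vertex, so the net contribution is $2\ell c^2/k \le 1/k$ for the stated $c$. With all three cases bounded by $1/k$ per step, $\E_h[\|T\|_F^{2\ell}] \le (6/k)^\ell$ directly --- no centering, no forest counting, no $\ell^{O(\ell)}$.

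Your proposed alternative via $F \le \sum_j q_j^2$ with $q_j = \sum_{h(s)=j} x_s^2$ is \emph{not} the route the paper takes for this lemma; the paper reserves the weighted balls-in-bins analysis of the $q_j$ (there called $\alpha_j$) for the operator-norm bound in \Lemma{operator}, where a max over $j$ rather than a sum is needed. Bounding $\sum_j q_j^2$ would still require essentially the same moment expansion you already set up, so it is not really a simplification here.
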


\begin{lemma}\LemmaName{operator}
$\Pr_h[\|T\|_2 > \eps/(128\cdot \log(1/\delta))] < \delta$.
\end{lemma}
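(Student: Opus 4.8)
The plan is to exploit the block structure of $T$. Since $T_{s,t}\neq 0$ forces $h(s)=h(t)$, the matrix $T$ is block-diagonal with respect to the partition of $[d]$ into the hash buckets $B_i=\{s\in[d]:h(s)=i\}$, $i\in[k]$; write $x|_{B_i}$ for the restriction of $x$ to $B_i$ and $T^{(i)}$ for the principal submatrix of $T$ on $B_i\times B_i$. For $s,t\in B_i$ one has $\eta_{s,t}=\mathbf{1}[s\neq t]$, so $T^{(i)}=(x|_{B_i})(x|_{B_i})^T-D_i$ with $D_i=\mathrm{Diag}(x_s^2:s\in B_i)$ (the diagonal is subtracted since $\eta_{s,s}=0$). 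Both summands are positive semidefinite, so Weyl's inequality gives $\|T^{(i)}\|_2\le\max\{\|x|_{B_i}\|_2^2,\,\|x|_{B_i}\|_\infty^2\}=\|x|_{B_i}\|_2^2$. Since $T$ is the direct sum of the $T^{(i)}$ (up to a permutation), $\|T\|_2=\max_i\|T^{(i)}\|_2\le\max_{i\in[k]}W_i$, where $W_i:=\|x|_{B_i}\|_2^2=\sum_{s:h(s)=i}x_s^2$. So it suffices to show $\Pr_h[W_i>\lambda]<\delta/k$ for each fixed $i$, with $\lambda=\eps/(128\log(1/\delta))$, and then union bound over $i\in[k]$. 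Note this route, unlike that of \cite{DKS10}, uses no correlation inequality such as FKG.

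For fixed $i$, $W_i=\sum_{s\in[d]}x_s^2\,\mathbf{1}[h(s)=i]$ is a sum of nonnegative terms, each at most $\|x\|_\infty^2\le c^2$, whose ``on'' events are $r_h$-wise independent and each have probability $1/k$; in particular $\E[W_i]=\|x\|_2^2/k=1/k$, and since $\lambda/\E[W_i]=\Theta(1/\eps)$ by the value of $k$, we want an upper-tail bound well above the mean. I would prove it by the moment method: for an integer $q\le r_h$, $\Pr_h[W_i>\lambda]\le\lambda^{-q}\E[W_i^q]$, and expanding, $\E[W_i^q]=\sum_{(s_1,\dots,s_q)\in[d]^q}\big(\prod_{a=1}^q x_{s_a}^2\big)\,k^{-m(\vec s)}$, where $m(\vec s)$ is the number of distinct coordinates among $s_1,\dots,s_q$ (the probability that they all hash to $i$ is exactly $k^{-m(\vec s)}$, since $m(\vec s)\le q\le r_h$). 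Grouping by $m$, bounding the number of tuples realizing a given collision pattern, and using $\prod_a x_{s_a}^2\le c^{2(q-m)}\prod_{v\in\{s_1,\dots,s_q\}}x_v^2$ together with $\sum_v x_v^2=1$, one checks that $\E[W_i^q]$ is no larger than its value for fully independent ``on'' events, whence a routine Bernstein/Chernoff estimate gives $\Pr_h[W_i>\lambda]\le\exp(-\Omega(\lambda/c^2))=\exp(-\Omega(\log(k/\delta)))$. Taking $q=\Theta(\log(k/\delta))\le r_h$ and choosing the constant hidden in $c=\Theta(\sqrt{\eps/(\log(1/\delta)\log(k/\delta))})$ small enough makes this smaller than $\delta/k$, as required.

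The reduction to the maximum bucket load is routine; the main obstacle is the last step — carrying out the Bernstein-type estimate carefully enough that the per-bucket failure probability beats $\delta/k$ using only $O(\log(k/\delta))$-wise independence of $h$, and tracking constants so that the needed bound on $\|x\|_\infty$ is only $c=\Theta(\sqrt{\eps/(\log(1/\delta)\log(k/\delta))})$ rather than the $\Theta(\sqrt{\eps/(\log(1/\delta)\log^2(k/\delta))})$ required in \cite{DKS10}. This extra $\log(k/\delta)$ slack in $c$ is exactly what improves the column sparsity $1/c^2$ in the eventual JL construction.
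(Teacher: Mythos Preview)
Your proposal is correct and is essentially the paper's own proof: the paper also reduces $\|T\|_2$ to $\max_{j\in[k]}\alpha_j$ with $\alpha_j=\sum_{s:h(s)=j}x_s^2$ (your $W_j$), then bounds $\E_h[\alpha_j^\ell]$ by expanding and grouping monomials by the number $m$ of distinct indices, using $\prod_a x_{s_a}^2\le c^{2(\ell-m)}\prod_{\text{distinct }v}x_v^2$ and $\sum_v x_v^2=1$ to obtain $\E_h[\alpha_j^\ell]\le 2^\ell\max\{1/k,c^2\ell\}^\ell$, and concludes via a union bound over $j$ and Markov with $\ell=\Theta(\log(k/\delta))$. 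The only cosmetic differences are that the paper reaches the reduction to $\max_j\alpha_j$ via the decomposition $T=S-R$ with $S=\sum_j v_jv_j^T$ and $R=\mathrm{Diag}(x_s^2)$ rather than your (equivalent) block-diagonal observation, and carries out the moment bound by explicit combinatorics rather than invoking Bernstein---though it explicitly remarks that integrating Bernstein would yield the same conclusion.
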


The following theorem now implies our main theorem (\Theorem{main}).

\begin{theorem}\TheoremName{thisisit}
$$ \Pr_{h,\sigma}[|\|Ax\|_2^2 - 1| > \eps] < 3\delta .$$
\end{theorem}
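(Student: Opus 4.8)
The plan is to combine \Lemma{frobenius}, \Lemma{operator}, and \Lemma{dkn} via a conditioning argument on the random choice of $h$. First I would fix a good event $\mathcal{E}$ over $h$: by \Lemma{frobenius} and \Lemma{operator} and a union bound, $\Pr_h[\|T\|_F^2 > 7/k \text{ or } \|T\|_2 > \eps/(128\log(1/\delta))] < 2\delta$, so let $\mathcal{E}$ be the complementary event, which occurs with probability at least $1-2\delta$. Condition on any fixed $h\in\mathcal{E}$. Then $T$ is a fixed symmetric matrix with $\|T\|_F^2 \le 7/k$ and $\|T\|_2 \le \eps/(128\log(1/\delta))$, and $\mathrm{trace}(T) = \sum_s x_s^2\eta_{s,s} = 0$ since $\eta_{s,s} = 0$. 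Now $Z = \sigma^T T\sigma = \sigma^T T\sigma - \mathrm{trace}(T)$, and since $\|Ax\|_2^2 = \|x\|_2^2 + Z = 1 + Z$, we have $\{\,|\|Ax\|_2^2 - 1| > \eps\,\} = \{\,|Z| > \eps\,\}$.

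Next I would apply Markov's inequality on the $\ell$th moment exactly as in the proof of \Theorem{warmup}: for $\ell$ a power of $2$ with $2\ell \le r_\sigma$,
\[
\probg{|Z| > \eps}{h} \le \eps^{-\ell}\cdot \expectg{(\sigma^T T\sigma - \mathrm{trace}(T))^\ell}{h} \le \eps^{-\ell}\cdot 64^\ell\cdot \max\left\{\sqrt{\ell}\cdot\|T\|_F,\ \ell\cdot\|T\|_2\right\}^\ell
\]
by \Lemma{dkn}, where the moment bound only uses $r_\sigma$-wise independence of the entries of $\sigma$. Choosing $\ell = \log(1/\delta)$ (so $r_\sigma = 2\log(1/\delta)$ suffices) and plugging in the bounds from $\mathcal{E}$: the first term is $\eps^{-1}\sqrt{\ell\cdot 7/k} = \eps^{-1}\sqrt{7\log(1/\delta)/k}$, which with $k = 28\cdot 64^2\cdot\eps^{-2}\log(1/\delta)$ equals $1/(2\cdot 64)$; the second term is $\eps^{-1}\cdot\ell\cdot\eps/(128\log(1/\delta)) = 1/128 \le 1/(2\cdot 64)$. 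Hence $\max\{\cdots\} \le 1/(2\cdot 64)$, so $64^\ell\cdot\max\{\cdots\}^\ell \le 2^{-\ell} = \delta$. Thus $\probg{|Z|>\eps}{h}\le\delta$ for every $h\in\mathcal{E}$.

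Finally I would remove the conditioning by the law of total probability:
\[
\prob{|\|Ax\|_2^2 - 1| > \eps} \le \prob{\overline{\mathcal{E}}} + \probg{|Z| > \eps}{\mathcal{E}}\cdot\prob{\mathcal{E}} \le 2\delta + \delta = 3\delta,
\]
which is the claimed bound. The only real subtlety — and the place I'd expect to need care — is making sure the two sources of randomness are handled correctly: $h$ and $\sigma$ are drawn independently, so conditioning on $h$ leaves $\sigma$ with its original ($r_\sigma$-wise independent) distribution, and the moment bound of \Lemma{dkn} applies to the $\sigma$-randomness alone for each fixed $h\in\mathcal{E}$; one should also note that \Lemma{dkn} as stated requires genuinely i.i.d. Bernoullis, but as in \Theorem{warmup} only the $\ell$th moment is used, so $2\ell$-wise independence of the $\sigma(i)$ is enough to reproduce that moment. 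The bulk of the work is really deferred to \Lemma{frobenius} and \Lemma{operator}; given those, the above is a short assembly.
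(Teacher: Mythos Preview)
Your proposal is correct and follows essentially the same approach as the paper: define the good event $\mathcal{E}$ on $h$ via \Lemma{frobenius} and \Lemma{operator}, condition on $\mathcal{E}$, apply Markov together with the Hanson-Wright inequality (\Lemma{dkn}) to bound $\Pr_\sigma[|Z|>\eps\mid\mathcal{E}]$ by $\delta$ for $\ell=\log(1/\delta)$, and then undo the conditioning to get the $3\delta$ bound. Your write-up is in fact slightly more careful with the constants and with the independence bookkeeping than the paper's own proof.
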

\begin{proof}
Write
\begin{align*}
\|Ax\|_2^2 & = \|x\|_2^2 +
2\sum_{(s,t)\in\binom{[d]}{2}}x_sx_t\eta_{s,t}
\sigma(s)\sigma(t)\\
&{} = 1 + Z .
\end{align*}

We will show $\Pr_{h,\sigma}[|Z| > \eps] < 3\delta$.  Condition on
$h$, and let $\mathcal{E}$
be the event that $\|T\|_F^2 \le 7/k$ and $\|T\|_2 \le
\eps/\log(1/\delta)$. By applications of
\Lemma{frobenius} and \Lemma{operator} and a union bound,
$$ \Pr_{h,\sigma}[|Z| > \eps] < \Pr_{\sigma}[|Z| > \eps\mid
\mathcal{E}] + 2\delta .$$
By a
Markov bound applied to the random variable $Z^\ell$ for $\ell$ an even
integer,
$$\Pr_\sigma[|Z| > \eps\mid \mathcal{E}] < \E_\sigma[Z^\ell\mid
\mathcal{E}] / \eps^\ell .$$
Since $Z = \sigma^TT\sigma$ and $\mathrm{trace}(T) = 0$,
applying \Lemma{dkn} with $B=T$ and $2\ell \le r_\sigma$ gives
\begin{equation}\EquationName{apply-spectral}
\Pr_\sigma[|Z| > \eps\mid \mathcal{E}] < 64^\ell\cdot
\max\left\{\eps^{-1}\sqrt\frac{7\ell}{k},
  \frac{\ell}{128\cdot \log(1/\delta)}\right\}^\ell .
\end{equation}
since the $\ell$th moment is determined by $r_\sigma$-wise
independence of $\sigma$. We conclude the proof by noting that the
expression in \Equation{apply-spectral}
is at most $\delta$ for $\ell = \log(1/\delta)$.
\end{proof}

\begin{remark}
\textup{In the proof \Theorem{thisisit}, rather than condition on
$\mathcal{E}$ we can directly bound the $O(\log(1/\delta))$th moment
of $Z$ over the randomness of both $h$ and $\sigma$
simultaneously.  In this case, we use the Frobenius and operator norm
moments from \Equation{frobenius} and \Equation{operator} directly.
This gives
$$ \Pr_{h,\sigma}[|Z| > \eps] < \eps^{-\ell}\cdot 64^{\ell}\cdot
\max\left\{\left(\sqrt{\frac{6\ell}{k}}\right)^\ell, k\cdot
  \left(\frac{2\ell}{k}\right)^{\ell}, k\cdot
  \left(2c^2\ell^2\right)^\ell\right\} $$
as long as $h,\sigma$ are $\ell$-wise independent.  One can then set
$\ell = O(\log(1/\delta))$ and $c =
O((\sqrt{\eps}/\log(1/\delta))\cdot \eps^{2/\log(1/\delta)}) =
O(\sqrt{\eps^{1+o(1)}}/\log(1/\delta))$ to make the above probability
at most $\delta$.}
\end{remark}

\section{A high probability bound on $\|T\|_F$}\SectionName{frobenius}
In this section we prove \Lemma{frobenius}.

\vspace{.1in}

\begin{proofof}{\Lemma{frobenius}}
Recall that for $s,t\in[d]$, $\eta_{s,t}$ is the random variable
indicating that $s\neq t$ and $h(s) = h(t)$. Then, \Equation{Zdef}
implies that $\|T\|_F^2 = 2\sum_{s<t}x_s^2x_t^2\eta_{s,t}$. Note
$\|T\|_F^2$ is a random variable depending only on $h$. The plan of
our proof is to directly bound the $\ell$th moment of $\|T\|_F^2$ for
some large $\ell$ (specifically, $\ell = \Theta(\log(1/\delta))$),
then conclude by applying Markov's inequality to the random variable
$\|T\|_F^{2\ell}$. We bound the $\ell$th moment of $\|T\|_F^2$ via some
combinatorics.

We now give the details of our proof. Consider the expansion
$(\|T\|_F^2)^\ell$.  We have
\begin{equation}
 (\|T\|_F^2)^\ell =
2^\ell\cdot\sum_{\substack{s_1,\ldots,s_\ell\\t_1,\ldots,t_\ell\\\forall
    i\in[\ell] s_i < t_i}} \prod_{i=1}^\ell
x_{s_i}^2x_{t_i}^2\eta_{s_i,t_i}\EquationName{expand}
\end{equation}
Let $\mathcal{G}_\ell$ be the set of all isomorphism classes of
graphs (possibly containing multi-edges) with between $2$
and $2\ell$ unlabeled vertices, minimum degree at least $1$,
and exactly $\ell$ edges with distinct labels in $[\ell]$.
We now define a map
$f:\{\binom{[d]}{2}^\ell\}\rightarrow\mathcal{G}_\ell$ where
the notation $\binom{U}{r}$ denotes subsets of $U$ of size $r$;
i.e. $f$ maps the monomials in \Equation{expand} to
elements of $\mathcal{G}_\ell$.
Focus on one monomial in \Equation{expand} and let $S =
\{s_1,\ldots,s_\ell,t_1,\ldots,t_\ell\}$. We map the monomial
to an $|S|$-vertex element of $\mathcal{G}_\ell$ as follows:
associate each $u\in S$ with a vertex, and for each
$s_i,t_i$, draw an edge from the vertices associated with $s_i,t_i$
using edge label $i$.

We now analyze the expectation of the summation in \Equation{expand}
by grouping
monomials which map to the same elements of $\mathcal{G}_\ell$ under
$f$.

\begin{equation}
\E_h\left[(\|T\|_F^2)^\ell\right] =
2^\ell\cdot\sum_{G\in\mathcal{G}_\ell}
\sum_{\substack{\{(s_i,t_i)\}\in\binom{[d]}{2}^\ell
\\f(\{(s_i,t_i)\}) = G}}
\left(\prod_{i=1}^\ell
x_{s_i}^2x_{t_i}^2\right)\cdot
\E_h\left[\prod_{i=1}^\ell\eta_{s_i,t_i}\right] .\EquationName{graph}
\end{equation}

Observe that
$\prod_{i=1}^\ell \eta_{s_i,t_i}$ is determined by $h(s_i),h(t_i)$ for
each $i\in[\ell]$,
and hence its expectation is determined by $2\ell$-wise independence
of $h$.
Note that this product is $1$ if $s_i$ and $t_i$ hash to the same element
for each $i$ and is $0$ otherwise.  Each $s_i, t_i$ pair hash
to the same
element if and only if for each connected component of $G$, all
elements of $S = \{s_1,\ldots,s_\ell,t_1,\ldots,t_\ell\}$
corresponding to vertices in that component hash to the same
value.
For the $v_G$ elements we are concerned with, where $v_G = |S|$ is the
number of vertices in $G$, we can
choose one element of $[k]$ for each connected component.  Hence the
number
of possible values of $h$ on $S$ that cause $\prod_{i=1}^\ell
\eta_{s_i,t_i}$ to be $1$ is
$k^{m_G}$, where $G$ has $m_G$ connected components.
 Each possibility happens with probability $k^{-v_G}$.  Hence
$\E_h[\prod_{i-1}^\ell \eta_{s_i,t_i}] = k^{m_G - v_G}$.

Also,
consider the term $\prod_{i=1}^\ell x_{s_i}^2x_{t_i}^2 = \prod_{i=1}^{v_G}
x_{r_i}^{2\cdot \ell_i}$, where $S = \{r_i\}_{i=1}^{v_G}$, each $\ell_i$
is at least $1$, and $\sum_i \ell_i = 2\ell$ ($\ell_i$ is just the
degree of the vertex associated with $r_i$ in
$G$). Then,
$$\prod_{i=1}^{v_G} x_{r_i}^{2\cdot \ell_i} = \left(\prod_{i=1}^{v_G}
x_{r_i}^{2\cdot (\ell_i - 1)}\right) \cdot \left(\prod_{i=1}^{v_G}
x_{r_i}^2\right) \le \left(\prod_{i=1}^{v_G}
x_{r_i}^{2\cdot (\ell_i - 1)}\right) \cdot \left(\prod_{i=1}^{v_G}
x_{r_i}^2\right) \le c^{2(2\ell - v_G)}\cdot \left(\prod_{i=1}^{v_G}
x_{r_i}^2\right) .$$
Note then that the monomials $(\prod_{i=1}^{v_G}x_{r_i}^2)$ that arise
from the summation over $\{(s_i,t_i)\}\in\binom{d}{2}^\ell$ with
$f(\{(s_i,t_i)\}) = G$
in \Equation{graph}
are a subset of those monomials which appear in the expansion of
$(\sum_{i=1}^d x_i^2)^{v_G} = 1$.  Thus,
plugging back into \Equation{graph},
\begin{equation}
 \E_h\left[(\|T\|_F^2)^\ell\right] \le 2^\ell\cdot \sum_{G\in\mathcal{G}_\ell}
\frac{c^{2(2\ell - v_G)}}{k^{v_G - m_G}} .\EquationName{magic}
\end{equation}

Note the value $\ell$ in the $c^{2(2\ell - v_G)}$ term just arose as
$e_G$, the number of edges in $G$. We bound the above summation by
considering all ways to form an element of $\mathcal{G}_\ell$ by adding
one
edge at a time, starting from the empty graph $G_0$ with zero
vertices and edges. In fact we will overcount some
$G\in\mathcal{G}_\ell$, but this is acceptable since we only want an
upper bound on \Equation{magic}.

Define $F(G) = c^{2(2e_G - v_G)}/k^{v_G - m_G}$.
Initially
we have $F(G_0) = 1$. We will add $\ell$ edges in order by label, from
label $1$
to $\ell$. For the $i$th edge we have three options to form $G_i$
from $G_{i-1}$:
(a) we can add the edge between two existing vertices in $G_{i-1}$,
(b) we can add two new vertices to $G_{i-1}$ and place the edge
between them, or (c) we can create one new vertex and connect it to an
already-existing vertex of $G_{i-1}$. For each of these three options,
we will argue
that $n_i\cdot F(G_i)/F(G_{i-1})\le 1/k$,
where $n_i$ is the
number of ways to perform the operation we chose at step $i$.
This implies that the
right hand side of \Equation{magic} is at most $(6/k)^{\ell}$ since at each
step of forming an element of $\mathcal{G}_\ell$ we have three
options for how to form $G_i$ from $G_{i-1}$.

Let $e$ be the number of edges, $v$ the number of vertices, and $m$
the number of connected components for some $G_{i-1}$. In option (a),
$v$ remains constant, $e$ increases by $1$, and $m$ either remains
constant or decreases by $1$. In any case, $F(G_i)/F(G_{i-1}) \le
c^4$, and $n_i < 2\ell^2$; the latter is because we have
$\binom{v}{2} < 2\ell^2$
choices of vertices to connect. In option (b), $n_i = 1$, $v$
increases by $2$,
$e$ increases by $1$, and
$m$ increases by $1$, implying $n_i\cdot F(G_i)/F(G_{i-1}) =
1/k$. Finally, in
option (c), $n_i = v\le 2\ell$, $v$ increases by $1$, $e$ increases by
$1$, and $m$
remains constant, implying $n_i\cdot F(G_i)/F(G_{i-1}) \le
2\ell c^2/k$.  Thus,
regardless of which of the three options we choose,
$n_i\cdot F(G_i)/F(G_{i-1}) \le \max\{2\ell^2c^4, 1/k, 2\ell c^2/k\}$,
which is $1/k$ for $\ell = O(\log(1/\delta))$.

As discussed
above, when combined with \Equation{magic} this gives
\begin{equation}\E_h[(\|T\|_F^2)^\ell] \le (6/k)^\ell
  .\EquationName{frobenius}\end{equation}
 Then, by Markov's inequality on
the random variable $(\|T\|_F^2)^\ell$ for $\ell\ge 2$ and even, and
assuming $2\ell \le r_h$,
$$ \Pr_h[\|T\|_F^2 > 7/k] < (k/7)^{\ell}\cdot
\E_h[(\|T\|_F^2)^\ell] < (6/7)^\ell ,$$
which is at most $\delta$ for $\ell = \Theta(\log(1/\delta))$.
\end{proofof}


\section{A high probability bound on $\|T\|_2$}\SectionName{operator}
In this section we prove \Lemma{operator}.  For each
$j\in[k]$ we use
$\alpha_j$ to
denote $\sum_{\substack{i\in [d]\\h(i) =
    j}} x_i^2$.

\begin{lemma}\LemmaName{eigenbound}
$\|T\|_2 \le \max\{c^2,\max_{j\in[k]} \alpha_j\}$.
\end{lemma}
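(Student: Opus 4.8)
The plan is to bound the operator norm of the symmetric matrix $T$ by exploiting its block structure induced by $h$. Recall $T_{s,t} = x_s x_t \eta_{s,t}$, where $\eta_{s,t}$ indicates $s\neq t$ and $h(s)=h(t)$. So $T$ has a nonzero entry in position $(s,t)$ only when $s$ and $t$ land in the same hash bucket. After permuting coordinates so that indices mapping to the same bucket are contiguous, $T$ becomes block-diagonal with one block $B_j$ per bucket $j\in[k]$. The block $B_j$ is indexed by $\{i : h(i)=j\}$ and equals $(y_j y_j^T - D_j)$, where $y_j$ is the restriction of $x$ to that bucket and $D_j$ is the diagonal matrix with the entries $x_i^2$ for $i$ in the bucket (the subtraction of $D_j$ accounts for the ``$s\neq t$'' constraint killing the diagonal). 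Since $\|T\|_2 = \max_{j\in[k]} \|B_j\|_2$ for a block-diagonal symmetric matrix, it suffices to bound each $\|B_j\|_2$.

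Now I would bound $\|B_j\|_2 = \|y_j y_j^T - D_j\|_2$ by the triangle inequality. We have $\|y_j y_j^T\|_2 = \|y_j\|_2^2 = \sum_{i: h(i)=j} x_i^2 = \alpha_j$, and $\|D_j\|_2 = \max_{i: h(i)=j} x_i^2 \le c^2$ since $\|x\|_\infty \le c$. That would give $\|B_j\|_2 \le \alpha_j + c^2$, which is slightly weaker than the claimed $\max\{c^2, \alpha_j\}$. To get the sharper bound, I would instead argue directly about the eigenvalues of $B_j = y_j y_j^T - D_j$: the matrix $y_j y_j^T$ is PSD with top eigenvalue $\alpha_j$, and $-D_j$ is negative semidefinite with eigenvalues in $[-c^2, 0]$. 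For any unit vector $u$ supported on bucket $j$, $u^T B_j u = \langle y_j, u\rangle^2 - \sum_i x_i^2 u_i^2$, which lies in $[-c^2, \alpha_j]$: the upper bound follows since $\langle y_j, u\rangle^2 \le \|y_j\|_2^2 = \alpha_j$ and the subtracted term is nonnegative, and the lower bound follows since $\langle y_j, u\rangle^2 \ge 0$ and $\sum_i x_i^2 u_i^2 \le c^2 \|u\|_2^2 = c^2$. Hence every eigenvalue of $B_j$ has magnitude at most $\max\{c^2, \alpha_j\}$, so $\|B_j\|_2 \le \max\{c^2,\alpha_j\}$, and taking the max over $j$ gives $\|T\|_2 \le \max\{c^2, \max_{j} \alpha_j\}$.

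The only real content here is recognizing the block-diagonal structure and then handling the rank-one-minus-diagonal form of each block; the eigenvalue bound itself is a one-line quadratic-form estimate. The main (minor) obstacle is being careful to get the tight $\max$ rather than the sum from the triangle inequality — i.e., observing that the positive part of $B_j$'s spectrum comes only from $y_j y_j^T$ and the negative part only from $-D_j$, so they do not add. Everything else is bookkeeping about which coordinates sit in which bucket.
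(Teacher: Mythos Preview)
Your proof is correct and is essentially the same as the paper's. The paper writes the decomposition globally as $T = S - R$ with $S = \sum_j v_j v_j^T$ and $R = \mathrm{diag}(x_i^2)$, then uses that both $S$ and $R$ are PSD to get $\|T\|_2 \le \max\{\|S\|_2,\|R\|_2\} = \max\{\max_j \alpha_j, c^2\}$; you do the identical quadratic-form estimate block by block after observing the block-diagonal structure, which amounts to the same decomposition and the same key inequality $|a-b|\le\max\{a,b\}$ for $a,b\ge 0$.
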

\begin{proof}
Define the diagonal matrix $R$ with $R_{i,i} = x_i^2$, and put $S = T
+ R$. 
For each $j\in[k]$, consider the vector $v_j$ whose support is
$h^{-1}(j)$, with $(v_j)_i = x_i$ for each $i$ in its support. 
Then $S = \sum_{j=1}^k v_j\cdot v_j^T$. Thus $\mathrm{rank}(S)$ is
equal to the number of non-zero $v_j$, since they are clearly linearly
independent (they have disjoint support and are thus orthogonal) and
span the image of $S$. Furthermore, these non-zero $v_j$ are
eigenvectors of $S$ since $Sv_j= \alpha_j v_j$, and are the
only eigenvectors of $S$ with non-zero eigenvalue since if $u$ is
perpendicular to all such $v_j$ then $Au = 0$.

Now, $\|T\|_2 = \sup_{\|x\|_2 =
  1}|x^TTx| = \sup_{\|x\|_2 =
  1}|x^TSx - x^TRx|$. Since
$S,R$ are both positive semidefinite, we then have $\|T\|_2 \le
\max\{\|S\|_2, \|R\|_2\}$. $\|R\|_2$ is clearly
$\|x\|_\infty^2 \le c^2$, and we saw above that $\|S\|_2 =
\max_{j\in[k]} \alpha_j$.
\end{proof}

\vspace{.1in}

\begin{proofof}{\Lemma{operator}}
Fix some $j\in[k]$. Define $X_i =
x_i^2\delta_{i,j}$ so that $\alpha_j = \sum_{i=1}^d X_i$.  Then
\begin{equation}\EquationName{expand-alpha}
 \E_h[\alpha_j^\ell] = \sum_{1\le s_1,\ldots,s_\ell\le d}
\E_h\left[\prod_{i=1}^\ell X_{s_i}\right]
\end{equation}
Let $V_\ell$ be the set of
length-$\ell$ vectors $v$ with
non-negative integer entries such that if $r>0$ appears as an entry of
$v$, then at least
one appearance of $r-1$ is in $v$ at an earlier index.  
Define the map $f:[d]^\ell\rightarrow V_{\ell}$ as follows: a vector
$w\in[d]^\ell$ maps to the vector where for each $i\in[\ell]$, if
$w_i$ is the $r$th distinct value ($0$-based indexing) to appear in
$w$ then we replace $w_i$ with $r$.  For example, $f((14,1,4,14)) =
(0, 1, 2, 0)$.  We group the monomials in \Equation{expand-alpha} by
equal images under $f$.  That is,
\begin{align*}
 \E_h[\alpha_j^\ell] & = \sum_{v\in V_{\ell}}\sum_{\substack{1\le
    s_1,\ldots,s_\ell \le d\\f((s_1,\ldots,s_\ell)) = v}}
\E_h\left[\prod_{i=1}^\ell X_{s_i}\right]
 = \sum_{v\in
  V_{\ell}}\sum_{\substack{1\le
    s_1,\ldots,s_\ell \le d\\f((s_1,\ldots,s_\ell)) = v}}
\left(\prod_{i=1}^\ell x_{s_i}^2\right)
\cdot\E_h\left[\prod_{i=1}^\ell\delta_{s_i,j}\right] \\
&{}= \sum_{v\in
  V_{\ell}}\sum_{\substack{1\le
    s_1,\ldots,s_\ell \le d\\f((s_1,\ldots,s_\ell)) = v}}
\left(\prod_{i=1}^\ell x_{s_i}^2\right)k^{-m_v} \le \sum_{v\in
  V_{\ell}}\sum_{\substack{1\le
    s_1,\ldots,s_\ell \le d\\f((s_1,\ldots,s_\ell)) = v}}
\frac{c^{2(\ell - m_v)}}{k^{m_v}}
\end{align*}
where the penultimate equality holds if $\ell \le r_h$ and $m_v$ is
the number of distinct values amongst the entries of $v$. 
The final equality holds since, pulling out a $c^2$ term for each
multiple occurrence of any $s_i$, for a fixed $v$ these terms
all show up in the expansion of $(\|x\|_2^2)^{m_v}$.

Now
we bound the double summation above.  Begin with the empty sequence
$v_0 = ()$
(in $V_0$).  We will arrive at some $v_\ell\in V_\ell$ by appending an
entry one at a time.  In transitioning from $v_{i-1}$ to $v_{i}$ we can
either (i) repeat an entry that already appeared in $v_{i-1}$, or
(ii) add a new entry (whose identity is unique: it must be the next
largest integer which has not appeared in $v_{i-1}$).  For (i) there
are $m_{v_{i-1}} \le \ell$ ways to choose a pre-existing integer to
repeat, $i$ increases by $1$, and $m_{v_i}=
m_{v_{i-1}}$, and thus we gain a factor of $c^2\ell$.  For (ii), there
is one way to choose a new integer to appear, $i$ increases by $1$, and
$m_{v_i} = m_{v_{i-1}} + 1$, and thus we gain a factor of $1/k$.  Since
  at each step we have two options to choose from (either perform (i)
  or (ii)), 
$\E_h[\alpha_j^\ell] \le 2^\ell \cdot \max\{1/k, c^2\ell\}^\ell$.
We then have 
\begin{equation}
 \E_h[\|T\|_2^\ell] \le \E_h\left[\left(\max_{j\in [k]}
     \alpha_j\right)^\ell\right] =
\E_h\left[\max_{j\in[k]}
\alpha_j^\ell\right]
\le \sum_{j=1}^k\E_h[\alpha_j^\ell] \le k\cdot 2^\ell
\cdot \max\{1/k, c^2\ell\}^\ell \EquationName{operator}
\end{equation}
The lemma follows by a Markov bound with $\ell = O(\log(k/\delta))$,
i.e. $\Pr_h[\|T\|_2 >
\lambda] < \lambda^{-\ell}\cdot \E_h[\|T\|_2^\ell]$ and we set
$\lambda = \eps/(128\cdot \log(1/\delta))$.
\end{proofof}

\begin{remark}
\textup{One could integrate the Bernstein inequality tail bound to obtain
a moment bound which applies to $\E_h[\alpha_j^\ell]$ in the proof of
\Lemma{operator}. The conclusion would not improve.
We chose to give an elementary proof
to be self-contained.}
\end{remark}

\section*{Acknowledgments}
This work was done as the authors interned at Microsoft
Research New England in Summer 2010.  

\bibliographystyle{plain}

\bibliography{allpapers}

\appendix

\section*{Appendix}\SectionName{appendix}

\section{Optimality of \Lemma{jl-lemma}}\SectionName{jl-optimal}
Jayram and Woodruff gave a proof that the $k =
\Omega(\eps^{-2}\log(1/\delta))$ in \Lemma{jl-lemma} is optimal
\cite{JW11}.  Their proof went through communication and information
complexity.
We here give another proof of this fact, via some linear
algebra and direct calculations.

Note that for a distribution $\mathcal{D}$, if $\Pr_{A\sim
  \mathcal{D}}[|\|Ax\|_2^2 - 1|] < \delta]$ for any $x\in S^{d-1}$,
then it must be the case that $\Pr_{A\sim \mathcal{D}}[\Pr_{x\in
  S^{d-1}}[|\|Ax\|_2^2 - 1|]] < \delta$.  The following theorem shows
that no $A\in\R^{k\times d}$ can have $\Pr_{x\in
  S^{d-1}}[|\|Ax\|_2^2 - 1|] < \delta$ unless $k$ is at least as large
as in the statement of \Lemma{jl-lemma}.

\begin{theorem}
If $A:\R^d\rightarrow\R^k$ is a linear transformation with $d > 2k$
and $\epsilon>0$ sufficiently small, then for $x$ a randomly chosen
vector in $S^{d-1}$, $\Pr[|\|Ax\|_2^2-1| > \epsilon] \ge
\exp(-O(k\epsilon^2+1))$.
\end{theorem}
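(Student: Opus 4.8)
The plan is to reduce, via the singular value decomposition, to an anti-concentration estimate for a centered diagonal quadratic form on the sphere, and then establish that estimate from moment computations plus a short case analysis. For the reduction: writing the SVD $A = U\Sigma V^T$ and using that the uniform law on $S^{d-1}$ is rotation invariant while $\|Ax\|_2 = \|\Sigma V^T x\|_2$, we may assume $A = \mathrm{diag}(\sigma_1,\dots,\sigma_k)$ (padded with zero columns), so $\|Ax\|_2^2 = \sum_{i=1}^k \lambda_i x_i^2$ with $\lambda_i = \sigma_i^2 \ge 0$ and $x$ uniform on $S^{d-1}$; I realize $x = g/\|g\|_2$ with $g \sim N(0,I_d)$. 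Put $\Lambda = \sum_i\lambda_i$ and $\mu = \Lambda/d = \E\|Ax\|_2^2$. The hypothesis $d>2k$ enters here: since $\|Ax\|_2^2 - \mu = x^T M x$ for the traceless diagonal matrix $M = \mathrm{diag}(\lambda_1-\mu,\dots,\lambda_k-\mu,-\mu,\dots,-\mu)$, Cauchy--Schwarz ($\sum_i\lambda_i^2 \ge \Lambda^2/k$) gives $\|M\|_F^2 = \sum_i\lambda_i^2 - d\mu^2 \ge d\mu^2(d-k)/k > d^2\mu^2/(2k)$.

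Next I would record the moments of $x^T M x = \sum_i \mu_i x_i^2$ (here $\mu_i$ is the $i$th diagonal entry of $M$, so $\sum_i\mu_i = 0$) for $x$ uniform on $S^{d-1}$; these are exact rational functions of the power sums of $(\mu_i)$, obtained from $\E[x_i^4] = 3/(d(d+2))$, $\E[x_i^2 x_j^2] = 1/(d(d+2))$ for $i\ne j$, and their higher analogues. The facts I need: $\E[(\|Ax\|_2^2)^2] \le 3\mu^2$; the exact variance formula, which with the Frobenius bound above gives $v := \Var(\|Ax\|_2^2) = \Theta(\|M\|_F^2/d^2) \ge \mu^2/(2k)$; and a spherical Hanson--Wright bound $\E[(x^T M x)^{2m}] \le (C/d)^{2m}\max\{\sqrt m\,\|M\|_F,\ m\|M\|_2\}^{2m}$ for $1 \le m \le \Theta(d)$, which I would deduce from \Lemma{dkn} applied to $g^T M g = \|g\|_2^2\cdot x^T M x$ using independence of $\|g\|_2$ from $x$ and concentration of $\|g\|_2^2$ about $d$ (in particular $\E[(\|Ax\|_2^2-\mu)^4] \le C_1 v^2$, using $\|M\|_2 \le \|M\|_F$).

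Now the case analysis on $\mu$. If $\mu \le 1/2$, Markov applied to $\|Ax\|_2^2 \ge 0$ gives $\Pr[\|Ax\|_2^2 \ge 3/4] \le 2/3$, hence $\Pr[\|Ax\|_2^2 < 1-\eps] \ge 1/3$; if $\mu \ge 2$, Paley--Zygmund with $\E[(\|Ax\|_2^2)^2] \le 3\mu^2$ gives $\Pr[\|Ax\|_2^2 \ge \tfrac34\mu] \ge \tfrac1{48}$ and $\tfrac34\mu \ge \tfrac32 > 1+\eps$; in both cases the probability is $\ge \exp(-O(1))$. So assume $\mu \in [\tfrac12,2]$, whence $1/(8k) \le v \le 12$, and by symmetry assume $\mu \le 1$. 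Since $\{\|Ax\|_2^2 \le 1-\eps\} = \{x^T M x \le 1-\eps-\mu\}$ and $1-\eps-\mu \ge -\eps$, it suffices to lower bound $\Pr[x^T M x \le -\eps]$; and since $v \ge 1/(8k)$ gives $\eps^2/v \le 8k\eps^2$, the theorem follows from the medium-deviation lower bound
\[ \Pr[\,x^T M x \le -s\,] \ \ge\ \exp\big(-O(s^2/v)\big) \qquad\text{for all } 0 \le s^2 \le c_2\, k v, \]
applied with $s=\eps$. This estimate (and its mirror $\Pr[x^T M x \ge s] \ge \exp(-O(s^2/v))$, used when $\mu \ge 1$) is the crux. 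I would prove it either (a) by an exponential change of measure on $g$: under $dQ \propto e^{\theta g^T M g}\,dg$ the $g_i$ stay independent centered Gaussians with variances $(1-2\theta\mu_i)^{-1}$; picking $\theta<0$ with $\E_Q[x^T M x] = -2s$ gives $\Pr_P[x^T M x \le -s] \ge e^{-O(|\theta| s)}\cdot Q[-3s \le x^T M x \le -s]$, the latter bounded below by Chebyshev under $Q$, with exponent $O(s^2 d^2/\|M\|_F^2) = O(s^2/v)$ in the stated range (keeping $\theta$ small is exactly where the operator-norm term of \Lemma{dkn} is used); or (b), in the flat-spectrum case where $\|Ax\|_2^2 = (\Lambda/k)\rho^2$ with $\rho^2 \sim \mathrm{Beta}(k/2,(d-k)/2)$, by directly integrating the Beta density near $(1-s')(k/d)$, and reducing the general case to this one by a sandwiching/majorization argument.

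The main obstacle is this medium-deviation lower bound. Because a quadratic form is only sub-exponential, one must stay inside the window $s/\sqrt v = O(\sqrt k)$ where it still behaves in a Gaussian fashion; the whole point of the $d>2k$ hypothesis — through the Frobenius-norm lower bound of the first step — is to guarantee that the deviation $s=\eps$ we actually care about lies in this window, since $\eps/\sqrt v = O(\eps\sqrt k)$. A secondary, purely bookkeeping, difficulty — handled by passing to the one-sided event ``away from $1$'' above — is ensuring that displacing $\|Ax\|_2^2$ from its mean also displaces it from $1$, which is not automatic when $\mu$ and $1$ both sit in the bulk of the distribution.
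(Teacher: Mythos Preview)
Your approach is correct but takes a considerably more laborious route than the paper's. You reduce via the SVD to a diagonal quadratic form on the sphere and then aim for a general medium-deviation anti-concentration bound for $x^T M x$ with traceless diagonal $M$, via Hanson--Wright moments, a case split on $\mu = \E\|Ax\|_2^2$, and an exponential-tilting argument. The paper instead makes a single conditioning step that collapses the general case to exactly what you call the ``flat-spectrum'' case: writing $W = (\ker A)^\perp$ and $x = r_V\Omega_V + r_W\Omega_W$, one has $\|Ax\|_2^2 = \|A\Omega_W\|_2^2 \cdot s_W$ where $s_W = r_W^2 \sim \mathrm{Beta}(k/2,(d-k)/2)$ independently of $\Omega_W$. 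After conditioning on $\Omega_W$ the constant $C = \|A\Omega_W\|_2^2$ is fixed and $\|Ax\|_2^2 = C s_W$ is just a rescaled Beta; an elementary log-density estimate $\log f(s_0(1+t)) = \log f(s_0) - \Theta(kt^2)$ near the mode $s_0$ then gives both $\Pr[s_W < (1-\eps)s_0] \ge \exp(-O(k\eps^2+1))$ and $\Pr[s_W > (1+\eps)s_0] \ge \exp(-O(k\eps^2+1))$, and for any $C$ one of these two tail events forces $Cs_W \notin [1-\eps,1+\eps]$. This is precisely your option~(b), except that conditioning on $\Omega_W$ makes the ``sandwiching/majorization'' reduction you envisage unnecessary --- every instance already \emph{is} the Beta case. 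Your option~(a) also goes through, but note a wrinkle in the sketch: under $dQ \propto e^{\theta g^T M g}\,dP$ the coordinates of $g$ have unequal variances, so $g/\|g\|_2$ is no longer uniform on $S^{d-1}$ and ``pick $\theta$ with $\E_Q[x^T M x] = -2s$'' is not the clean object you want; the standard fix is to tilt $Y = g^T M g$, lower-bound $\Pr_P[Y \le -2sd]$ directly, and only then transfer to the sphere via $\Pr_P[x^T M x \le -s] \ge \Pr_P[Y \le -2sd] - \Pr_P[\|g\|_2^2 > 2d]$, using that $x$ and $\|g\|_2$ are independent under $P$. The payoff of the paper's conditioning trick is that it trades all of this quadratic-form machinery --- your Frobenius-norm lower bound, the $\mu$ case analysis, Paley--Zygmund, and the tilting --- for half a page of calculus on one explicit density.
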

\begin{proof}
First we note that we can assume that $A$ is surjective since if it is
not, we may replace $\R^k$ by the image of $A$.  Let $V=\ker(A)$ and
let $W=V^\perp$.  Then $\dim(W)=k$, $\dim(V)=d-k$.  Now, any $x\in
\R^d$ can be written uniquely as $x_V + x_W$ where $x_V$ and $x_W$ are
the components in $V$ and $W$ respectively.  We may then write $x_V =
r_V \Omega_V$, $x_W = r_W \Omega_W$, where $r_V,r_W$ are positive real
numbers and $\Omega_V$ and $\Omega_W$ are unit vectors in $V$ and $W$
respectively.  Let $s_V=r_V^2$ and $s_W=r_W^2$.  We may now
parameterize the unit sphere by $(s_V,\Omega_V,s_W,\Omega_W)\in
[0,1]\times S^{d-k-1}\times [0,1]\times S^{k-1}$, so that $s_V+s_W=1.$
It is clear that the uniform measure on the sphere is given in these
coordinates by $f(s_W)ds_Wd{\Omega_V}d{\Omega_W}$ for some function
$f$.  To compute $f$ we note that $f(s_W)$ should be proportional to
the limit as $\delta_1,\delta_2\rightarrow 0^+$ of
$(\delta_1\delta_2)^{-1}$ times the volume of points $x$ so that
$\|x\|_2^2 \in [1,1+\delta_1]$ and $\|x_W\|_2^2\in [s_W,s_W+\delta_2]$.
Equivalently, $\|x_W\|_2^2\in [s_W,s_W+\delta_2]$, and $\|x_V\|_2^2 \in
[1-\|x_W\|_2^2,1-\|x_W\|_2^2+\delta_1]$.  For fixed $x_W$, the latter volume
is within $O(\delta_1\delta_2)$ of the volume of $x_V$ so that
$\|x_V\|_2^2 \in [s_V,s_V+\delta_1]$.  Now the measure on $V$ is
$r_V^{d-k-1}dr_Vd{\Omega_V}$.  Therefore it also is
$\frac{1}{2}s_V^{(d-k-2)/2}ds_Vd\Omega_V$.  Therefore this volume over
$V$ is proportional to
$s_V^{(d-k-2)/2}(\delta_1+O(\delta_1\delta_2+\delta_1^2))$.  Similarly
the volume of $x_W$ so that $\|x_W\|_2^2\in [s_W,s_W+\delta_2]$ is
proportional to $s_W^{(k-2)/2}(\delta_2+O(\delta_2^2))$.  Hence $f$ is
proportional to $s_V^{(d-k-2)/2}s_W^{(k-2)/2}$.

We are now prepared to prove the theorem.  The basic idea is to
first condition on $\Omega_V,\Omega_W$.  We let $C=\|A\Omega_W\|_2^2$.
Then if $x$ is parameterized by $(s_V,\Omega_V,s_W,\Omega_W)$,
$\|Ax\|_2^2 = Cs_W$.  Choosing $x$ randomly, we know that $s=s_W$
satisfies the distribution
$\frac{s^{(k-2)/2}(1-s)^{(d-k-2)/2}}{\beta((k-2)/2,(d-k-2)/2)}ds=f(s)ds$
on $[0,1].$  We need to show that for any $c=\frac{1}{C}$, the
probability that $s$ is not in $[(1-\epsilon)c,(1+\epsilon)c]$ is
$\exp(-O(\epsilon^2 k))$.  Note that $f(s)$ attains its maximum value
at $s_0 = \frac{k-2}{d-4} < \frac{1}{2}$.  Notice that
$\log(f(s_0(1+x)))$ is some constant plus $\frac{k-2}{2}\log(s_0(1+x))
+\frac{d-k-2}{2}\log(1-s_0-xs_0)$.  If $\|x\|_2<1/2$, then this is some
constant plus $-O(kx^2)$. So for such $x$, $f(s_0(1+x)) =
f(s_0)\exp(-O(kx^2))$. Furthermore, for all $x$, $f(s_0(1+x)) =
f(s_0)\exp(-\Omega(kx^2))$. This says that $f$ is bounded above by a
normal distribution and checking the normalization we find that
$f(s_0) = \Omega(s_0^{-1}k^{1/2})$.
 
We now show that both $\Pr(s<(1-\epsilon)s_0)$ and
$\Pr(s>(1+\epsilon)s_0)$ are reasonably large.  We can lower bound
either as
\begin{align*}
s_0\int_{\epsilon}^{1/2}f(s_0)\exp(-O(kx^2))dx & \geq
\Omega(k^{1/2})\int_{\epsilon}^{\epsilon+k^{-1/2}}\exp(-O(kx^2))dx\\
& \geq \Omega(\exp(-O(k(\epsilon+k^{-1/2})^2)))\\
& \geq \exp(-O(k\epsilon^2+1)).
\end{align*}
Hence since one of these intervals is disjoint from
$[(1-\epsilon)c,(1+\epsilon)c]$, the probability that $s$ is not in
$[(1-\epsilon)c,(1+\epsilon)c]$ is at least $\exp(-O(k\epsilon^2
+1)).$
\end{proof}

\section{A JL Lemma derandomization}\SectionName{derandomize}
We give an explicit JL family with seed length
$O(\log d + \log(1/\eps)\log(1/\delta) +
\log(1/\delta)\log\log(1/\delta))$.  This seed length is always at
least at least as good as the $O(\log(1/\delta)\log d)$ seed length
coming from $k$-wise independence, but can be much better for some
settings of parameters.

The idea is simply to graduately reduce the dimension.   Consider
values $\eps',\delta'>0$ which we will pick later.
Define $t_j = \delta'^{-1/2^j}$.
We embed
$\R^d$ into $\R^{k_1}$ for $k_1 = \eps'^{-2}t_1$.  We then embed
$\R^{k_{j-1}}$ into $\R^{k_{j}}$ for $k_j = \eps'^{-2}t_j$ until the
point $j = j^* = O(\log(1/\delta')/\log\log(1/\delta'))$ where $t_{j^*}
= O(\log^3(1/\delta'))$.  We then embed $\R^{k_{j^*}}$ into $\R^k$ for
$k = O(\eps'^{-2}\log(1/\delta'))$.

The embeddings into each $k_j$ are performed by picking a Bernoulli
matrix with $r_j$-wise independent entries, as in \Theorem{warmup}.
To achieve error probability $\delta'$ of having distortion larger than
$(1+\eps')$ in the $j$th step, \Equation{warmup-eq} in the proof of
\Theorem{warmup} tells us we need $r_j =
O(\log(1/\delta')/(t_j/\log(1/\delta')))$.  Thus, in the first embedding
into $\R^{k_1}$ we need $O(\log d)$ random bits.  Then in the future
embeddings except the last, we need $O(2^j\cdot (\log(1/\eps') +
2^{-j}\log(1/\delta')))$ random bits to embed into $\R^{k_j}$.  In the
final embedding we require $O(\log(1/\delta')\cdot (\log(1/\eps') +
\log\log(1/\delta'))$ random bits.  Thus, in total, we have used
$O(\log d + \log(1/\eps')\log(1/\delta') +
\log(1/\delta')\log\log(1/\delta'))$ bits of
seed to achieve error probability
$O(\delta'\cdot j^*)$ of distortion $(1+\eps')^{j^*}$.  The following theorem
thus follows by applying this argument with error probability $\delta'
= \delta/(j^*+1)$ and distortion parameter $\eps' = \Theta(\eps/j^*)$.

\begin{theorem}
For any $0<\eps,\delta<1/2$ there exists an explicit JL family with
seed length $s =
O(\log d +
\log(1/\eps)\log(1/\delta) + \log(1/\delta)\log\log(1/\delta))$. Given
a seed and a vector $x\in\R^d$, the embedding can be performed in
polynomial time.
\end{theorem}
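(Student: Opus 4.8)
The plan is to prove the theorem by the \emph{gradual dimension reduction} strategy sketched above: compose a short chain of the $r$-wise independent embeddings of \Theorem{warmup}, exploiting the fact that once the source dimension has already been shrunk, far weaker independence — and hence far fewer random bits — suffices for the next step. Fix $\eps',\delta'>0$ to be chosen at the end, set $t_j=\delta'^{-1/2^j}$ and $k_j=\eps'^{-2}t_j$, and let $j^*$ be the first index with $t_{j^*}=O(\log^3(1/\delta'))$, so that $j^*=O(\log\log(1/\delta'))$. We embed $\R^d\to\R^{k_1}\to\cdots\to\R^{k_{j^*}}\to\R^k$ with $k=O(\eps'^{-2}\log(1/\delta'))$, where each arrow is an independently seeded scaled Bernoulli matrix whose entries are $r_j$-wise independent, with $r=O(\log(1/\delta'))$ for the last arrow exactly as in \Theorem{warmup}. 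The JL family is then the set of all composed matrices indexed by the concatenation of the per-level hash seeds; since each level uses degree-$(r_j-1)$ polynomials over a finite field, the family is explicit.

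First I would determine $r_j$. Conditioned on the first $j-1$ embeddings, the input to the $j$th map is a fixed vector (which we normalize), so \Theorem{warmup} applies verbatim; taking its parameter ``$k$'' to be $k_j=\eps'^{-2}t_j$, \Equation{warmup-eq} bounds the failure probability of step $j$ by $64^\ell\cdot\max\{\sqrt{\ell/t_j},\,\eps'\ell/t_j\}^\ell$ whenever the entries are $2\ell$-wise independent. For small $\eps'$ the first term dominates, so it suffices to pick the least even $\ell$ with $(64^2\ell/t_j)^{\ell/2}\le\delta'$; since $\log t_j=2^{-j}\log(1/\delta')$ this forces only $\ell=O(2^j)$ at the intermediate levels (and $\ell=\Theta(\log(1/\delta'))$ at the last), hence $r_j=O(2^j)$. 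An $r_j$-wise independent family over the relevant index set (of size at most $k_{j-1}k_j\le\eps'^{-4}\delta'^{-2^{-(j-1)}}$) costs $O(r_j\cdot(\log(1/\eps')+2^{-j}\log(1/\delta')))$ bits; the first embedding, whose index set has size $O(d^2)$ but needs only $r_1=O(1)$, costs $O(\log d)$ bits, and the final one costs $O(\log(1/\delta')\cdot(\log(1/\eps')+\log\log(1/\delta')))$ bits.

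Then comes the accounting. Summing the per-level seed lengths over the $O(\log\log(1/\delta'))$ levels gives $O(\log d+\log(1/\eps')\log(1/\delta')+\log(1/\delta')\log\log(1/\delta'))$ total random bits. Composing the maps multiplies their distortions, so the full map distorts lengths by at most a factor $(1+\eps')^{j^*+1}\le 1+\eps$ once $\eps'=\Theta(\eps/j^*)$, and a union bound over the $j^*+1$ failure events gives total error at most $(j^*+1)\delta'\le\delta$ once $\delta'=\delta/(j^*+1)$; substituting these $\eps',\delta'$ into the seed bound changes it only by $\poly(\log\log(1/\delta))$ factors, which the stated $O(\cdot)$ absorbs since $\log\log\log(1/\delta)\le\log\log(1/\delta)\le\log(1/\delta)$. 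Finally, each level's hash function is a low-degree polynomial over a finite field, evaluable in polynomial time, and there are only $O(\log\log(1/\delta))$ levels, so the embedding runs in polynomial time.

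The main obstacle is exactly this bit-accounting, and in particular pinning down that $r_j$ decays fast enough in $j$: it is this decay that confines the $\log d$ term to the single first embedding and keeps the number of levels $j^*$ at $O(\log\log(1/\delta))$, which in turn keeps the distortion blow-up $(1+\eps')^{j^*}$ and the union-bounded error $j^*\delta'$ within an $O(\cdot)$-absorbable $\poly(\log\log(1/\delta))$ factor. One also has to check the routine facts that composing a $(1\pm\eps')$-embedding with a $(1\pm\eps')$-embedding yields a $(1\pm O(\eps'))$-embedding, and that conditioning on the earlier levels is legitimate precisely because it leaves the next level a fixed input vector, so \Theorem{warmup} is applicable at every level.
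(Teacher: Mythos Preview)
Your proposal is correct and follows essentially the same gradual-dimension-reduction argument as the paper: the same sequence $t_j=\delta'^{-1/2^j}$, the same per-level use of \Theorem{warmup} with $r_j=O(2^j)$ derived from \Equation{warmup-eq}, and the same final accounting and choice of $\eps',\delta'$. Your value $j^*=O(\log\log(1/\delta'))$ is in fact the correct one (the paper's stated $j^*=O(\log(1/\delta')/\log\log(1/\delta'))$ appears to be a typo, since $t_{j^*}=O(\log^3(1/\delta'))$ forces $2^{j^*}=\Theta(\log(1/\delta')/\log\log(1/\delta'))$ and hence $j^*=\Theta(\log\log(1/\delta'))$).
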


\begin{remark}
One may worry that along the way we are embedding into potentially
very large dimension (e.g. $1/\delta$ may be $2^{\Omega(d)}$), so that
our overall running time could be exponentially large.  However, we
can simply start the above iterative embedding at the level $j$ where
$\eps^{-2}t_j < d$.
\end{remark}

\end{document}